\documentclass[acmsmall, nonacm, natbib=false]{acmart}

\usepackage{microtype}

\usepackage{nicematrix,tikz}
\usepackage{adjustbox}

\usepackage{tikz}
\usepackage{pgfplots}
\usetikzlibrary {automata,positioning}
\usetikzlibrary {shapes.geometric}
\usetikzlibrary{arrows.meta}

\usepackage[
style=lncs,
backend=biber,
alldates=year,
sortcites=false
]{biblatex}

\addbibresource{cosmological.bib}

\usetikzlibrary{arrows.meta}
\usetikzlibrary{backgrounds}
\usepgfplotslibrary{patchplots}
\usepgfplotslibrary{fillbetween}
\pgfplotsset{%
    layers/standard/.define layer set={%
        background,axis background,axis grid,axis ticks,axis lines,axis tick labels,pre main,main,axis descriptions,axis foreground%
    }{
        grid style={/pgfplots/on layer=axis grid},%
        tick style={/pgfplots/on layer=axis ticks},%
        axis line style={/pgfplots/on layer=axis lines},%
        label style={/pgfplots/on layer=axis descriptions},%
        legend style={/pgfplots/on layer=axis descriptions},%
        title style={/pgfplots/on layer=axis descriptions},%
        colorbar style={/pgfplots/on layer=axis descriptions},%
        ticklabel style={/pgfplots/on layer=axis tick labels},%
        axis background@ style={/pgfplots/on layer=axis background},%
        3d box foreground style={/pgfplots/on layer=axis foreground},%
    },
}

\usepackage{algorithm}
\usepackage{algpseudocode}

\usepackage{xcolor}
\usepackage{tabularray}
\usepackage{array}

\usepackage{longtable}
\usepackage{booktabs}

\newcommand{\splt}{\mbox{\kern .1em $\diamond$\kern .1em}}

\newcommand{\texticon}[1]{\text{\scalebox{.7}{\kern 0.1em #1\kern .1em}}}

\newcommand{\automaton}[1]{{\ensuremath{\vphantom{t}\smash{\textit{#1}}}}}

\newcommand{\Tscissors}{\automaton{Scissors}}
\newcommand{\Tcounter}[1]{\ensuremath{\automaton{Cnt}_{#1}}}
\newcommand{\Taudio}{\automaton{Audio}}
\newcommand{\Taa}{\ensuremath{\automaton{Audio}_+}}
\newcommand{\Trecog}{\automaton{Sink}}
\newcommand{\Tgen}{\automaton{Src}}
\newcommand{\Rsplit}{\automaton{Splitting}}
\newcommand{\Rirred}{\automaton{Atom}}
\newcommand{\Tirred}{\automaton{AtomicF}}
\newcommand{\Tsinglemark}{\automaton{Mark}}
\newcommand{\Tmultimark}{\automaton{Multi}}

\newcommand{\dayone}{W_\text{day-one}}

\newcommand\restr[2]{{
  \left.\kern-\nulldelimiterspace 
  #1 
  \vphantom{\big|} 
  \right|_{#2} 
  }}

\makeatletter
\def\ps@headings{%
	\def\@evenfoot{\rlap{\thepage}\hfill\hbox to0pt{\hss{}\hss}\hfill}%
      \def\@oddfoot{\hfill\hbox to0pt{\hss\footnotesize\MakeUppercase{\rightmark}\hss}\hfill
	\llap{\thepage}}%
      \let\@evenhead\@empty%
      \let\@oddhead\@empty%
      \let\@mkboth\markboth
    \def\chaptermark##1{%
      \markright {\MakeLowercase{##1}}}%
    \def\sectionmark##1{}}%
\pagestyle{headings}
\makeatother

\theoremstyle{theorem}
\newtheorem{theorem}{Theorem}
\newtheorem{lemma}[theorem]{Lemma}

\theoremstyle{definition}

\author{Pierre Lairez}
\affiliation{\institution{Inria, Université Paris Saclay}\city{Palaiseau 91120}\country{France}}
\email{pierre.lairez@inria.fr}


\author{Aleksandr Storozhenko}
\affiliation{\institution{\'Ecole polytechnique} \city{Palaiseau 91120} \country{France}}
\email{aleksandr.storozhenko@polytechnique.edu}

\title{Conway's cosmological theorem and automata theory}
\markright{Conway's cosmological theorem}

\begin{document}

\begin{abstract}
  John Conway proved that every audioactive sequence (a.k.a. \emph{look-and-say}) decays into a compound of 94~elements, a statement he termed the \emph{cosmological theorem}. The underlying audioactive process can be modeled by a finite-state machine, mapping one sequence of integers to another. Leveraging automata theory, we propose a new proof of Conway's theorem based on a few simple machines, using a computer to compose and minimize them.
\end{abstract}

\maketitle

\section{Introduction}


In 1986, John Conway published his study of integer decay under \emph{audioactive} derivation \cite{Conway_1986,Conway_1987}, veiled in a brilliant atomic metaphor.
The derivation process, now well-known in recreational mathematics, mimics how we read strings.
For instance, the seed “55555”, which we read as “five fives”, is derived to “55”. In turn, “55" yields “25", iteratively generating the audioactive sequence:
\[ 55555 \to 55 \to 25 \to 1215 \to 11121115 \to 31123115 \to \dotsc \]
Any string of numbers, or \emph{word}, such as~“55555”, may be the start of an audioactive sequence\footnote{The strings appearing here should be understood as a sequence of numbers, which most of the time are digits. For example, we note that the audioactive derivation of~“2222222222” should really be the two-element sequence~$(10,2)$, not~$(1,0,2)$.}.

We say that a word \emph{splits} as the concatenation~$uv$ of two contiguous subwords~$u$ and~$v$ if, for all~$k \geq 0$, the $k$th audioactive derivation of~$uv$ is equal to the concatenation of the $k$th derivations of the subwords.
A nonempty word that does not split is called an \emph{atom}, and it is clear that every nonempty word either splits into atoms or is an atom itself. There exist infinitely many distinct atoms.
While most only emerge through the derivation of carefully selected seeds, Conway identified exactly 92 atoms, termed the \emph{common elements}, that appear in the derivation sequence of every word except ``22'' and the empty word.
He further identified two families of atoms, the \emph{transuranic elements}, which appear in the derivation of all words containing a digit~$d \geq 4$.

Celebrated by Conway as the finest achievement of “audioactive chemistry”, the \emph{cosmological theorem} states that there exists some~$N\geq 0$, such that for every word~$x$ and all~$k\geq N$, the $k$th audioactive derivation of~$x$
splits into common and transuranic elements.
An interesting corollary is the \emph{arithmetical theorem}: the length of the $k$th derivation of any given nonempty word, other than ``22'', exhibits geometric growth with ratio $\lambda \simeq 1.303557$, an explicit algebraic number (see \cite{Conway_1986} for more details).
The original proofs of the cosmological theorem, by Conway, Richard Parker, and Mike Guy, claiming a bound $N = 24$, has been lost, but complete proofs have since been given \cite{EkhadZeilberger_1997,Litherland_2003,Watkins_2006}.
\looseness=-1

The audioactive derivation is (almost) described by a type of finite-state machine, called a \emph{transducer}. It is beyond question that Conway, and all others who have subsequently studied audioactive decay, knew about automata theory and the associated formulation of the derivation process. Yet, none of the published proofs make use of it. We propose to fill this gap, leading to a very simple proof of the cosmological theorem based on two cornerstone results of automata theory: first, the composition of two transducers is a transducer; second, there exists an algorithm to check whether two automata recognize the same language (see Section~\ref{sec:basics-autom-theory}).

Our proof strategy closely follows that of Conway, but with a substantial modernization: whereas Conway's method relied on manually ``tracking a few hundred cases'' \cite[p.~14]{Conway_1986}, we harness the expressive power of automata and transducers, delegating the intricate casework to standard computational tools.
In summary, we begin by constructing an automaton to recognize splittings (Theorem~\ref{thm:splittings}).
From this, we derive a transducer capable of extracting the atoms of a given word.
Using this transducer, we generate an automata that recognizes all possible atoms after a specified number of derivations.
Notably, we find that the automata after $24$ and $25$ derivations are identical (Theorem~\ref{thm:cosmological}), indicating a convergence in the atom structure, thereby proving the cosmological theorem.
This approach involves working with automata and transducers that can have several thousand states.

\section{Basics of automata theory}
\label{sec:basics-autom-theory}
In this section, we present the key elements of automata theory that will be employed to demonstrate the cosmological theorem. For an in-depth review of automata theory, we refer to~\cite{HopcroftMotwaniUllman_2007, Shallit_2008, Sakarovitch_2009, Pin_2021}, for example.

\subsection{Transducers}

Let us define an \emph{alphabet} to be a finite set, calling its elements \emph{symbols}. A \emph{word} over an alphabet~$A$ is a finite sequence of symbols of~$A$. A \emph{subword} is a contiguous subsequence of a word. The set of words over~$A$ is denoted~$A^*$, and a \emph{language} over an alphabet~$A$ is a subset of~$A^{*}$. The empty word is denoted~$\varepsilon$, so we make sure that~$\varepsilon$ never denotes a symbol of the alphabet. We further define~$A^? = A \cup \left\{ \varepsilon \right\}$, the alphabet augmented with $\varepsilon$, denoting the absence of a symbol.

A \emph{transducer} is a finite directed graph whose edges may be labelled by an input and/or an output symbol, and whose states may be labelled with an \emph{initial} and/or \emph{final} tag.
More formally, a transducer is a tuple~$\mathcal{T} = (Q, A_\text{in}, A_\text{out}, E, Q_\text{initial}, Q_\text{final})$ where:
\begin{itemize}
  \item $Q$ is the finite set of \emph{states};
  \item $A_\text{in}$ and~$A_\text{out}$ are the \emph{input} and \emph{output} alphabets, respectively;
  \item $E \subseteq Q \times A_\text{in}^? \times A_\text{out}^? \times Q$ is the set of transitions, made of a source state, an input symbol (or~$\varepsilon$), an output symbol (or~$\varepsilon$), and a target state;
  \item $Q_\text{initial} \subseteq Q$ and~$Q_\text{final}\subseteq Q$ are the sets of \emph{initial} and \emph{final states}, respectively.
\end{itemize}

A transducer~$\mathcal{T}$ defines a \emph{transduction relation}; that is, a binary relation between~$A^*$ and~$B^*$, denoted~$\to_\mathcal{T}$.
We say that~$u \to_\mathcal{T} v$ if there is a path in the graph of the transducer~$\mathcal{T}$
from an initial state to a final state, such that the concatenation of the input (respectively output) symbols at the edges along the path is equal to~$u$ (respectively~$v$). We say that~$u$ is the \emph{input word} and~$v$ is the \emph{output word}.
If we want to interpret the transducer as a machine reading some input---transitioning from one state to another after each symbol,
and producing output on transitions---it is a \emph{nondeterministic} machine:
for a given input word, there may be zero, one, finitely or infinitely many execution paths and output words.
This nondeterminism will be used extensively.
The \emph{input language} of~$\mathcal{T}$, denoted~$L_\text{in}(\mathcal{T})$, is the set of all~$u\in A_\text{in}^*$ such that~$u\to_\mathcal{T} v$ for some~$v\in A_\text{out}^*$.
The \emph{output language} of~$\mathcal{T}$, denoted~$L_\text{out}(\mathcal{T})$, is the set of all~$v\in A_\text{out}^*$ such that~$u\to_\mathcal{T} v$ for some~$u\in A_\text{in}^*$.
We say that~$\mathcal{T}$ \emph{accepts} (respectively \emph{rejects}) a word~$x\in A_\text{in}^*$ if it belongs (respectively does not belong) to~$L_\text{in}(\mathcal{T})$.

Different transducers~$\mathcal{S}$ and~$\mathcal{T}$ may induce the same transduction relation. In this case, we say that they are \emph{equivalent} and write~$\mathcal{S} \equiv \mathcal{T}$.
Note that the equivalence of transducers is undecidable, meaning it cannot be verified by any finite-time algorithm \cite[\S\textbf{3}.5]{Pin_2021}.

\subsection{Useful examples of transducers}
Figures~\ref{fig:multimark}, \ref{fig:singlemark}, \ref{fig:scissors} and~\ref{fig:counter} show examples of transducers that will be useful in the proof of the cosmological theorem, and that we describe below.
Let~$A$ be an alphabet not containing the symbol~$\splt$, and let~$B = A \cup \left\{ \splt \right\}$.

The “multimark” transducer, denoted~$\Tmultimark$ (Figure~\ref{fig:multimark}), works with the input alphabet~$A$ and output alphabet~$B$. It inserts arbitrarily many  $\splt$ symbols nondeterministically in the input word. The relation induced by this transducer is characterized as follows: ~$u \to_{\small\Tmultimark} v$ if and only if~$u$ can be obtained from~$v$ by deleting the~$\splt$ symbols. For example,~$312 \to_{\small\Tmultimark} 3\splt1\splt\splt2\splt\splt$.

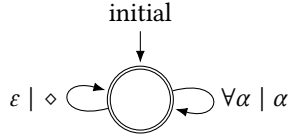
\begin{figure}[th]
  \centering

    \begin{tikzpicture}[shorten >=1pt, node distance=2cm, on grid, auto, initial text=initial, initial where=above, scale=1.0, transform shape, >={Latex[scale=0.945]}]

        \node [state, initial]  (q_0)                {};
        \draw (q_0) circle (11.4pt); 
        \path[->] (q_0) edge [loop right]  node[right] {$\forall\alpha \mid \alpha $} (q_0);
        \path[->] (q_0) edge [loop left]  node[left] {$\varepsilon \mid \splt $} (q_0);

    \end{tikzpicture}

  \caption{%
    The “multimark”, transducer, denoted~$\Tmultimark$.
    The edges are labelled with the convention ``input symbol $\mid$ output symbol''.
    The notation $\forall \alpha$ for the input symbol means the corresponding edge
    should be duplicated for each symbol in the input alphabet.
    When the output symbol is~$\alpha$, it means “copy the input symbol”.
    The “initial” arrow marks initial state(s). The double stroke marks final state(s).
  }
  \label{fig:multimark}
\end{figure}

The “single mark” transducer, denoted~$\Tsinglemark$ (Figure~\ref{fig:singlemark}), works with the input alphabet~$A$ and the output alphabet~$B$. It nondeterministically inserts a $\splt$ symbol somewhere in the input word, not before the first symbol, and not after the last one. It also accepts the empty word. The relation induced by this transducer is characterized as follows: $\varepsilon\to_{\small\Tsinglemark} \varepsilon$ and~$uv \to_{\small\Tsinglemark} u\splt v$ for any two nonempty words~$u$ and~$v$ not containing~$\splt$.

\begin{figure}[th]
  \centering
  \begin{tikzpicture}[shorten >=1pt,node distance=3cm,on grid,auto,initial text=initial, >={Latex[scale=0.9]}]

    \node[state, initial]  (q_0)                {first};
    \draw (q_0) circle (12.1pt);
    \node[state]          (q_1) [right=of q_0] {\phantom{r}left of \splt};
    \node[state]          (q_2) [right=of q_1] {right of \splt};
    \node[state]          (q_3) [right=of q_2] {last};
    \draw (q_3) circle (11.6pt);

    \path[->] (q_0) edge node[pos=0.48, above, sloped] {$\forall\alpha \mid\hspace{-0.1em} \alpha$}(q_1);
    \path[->] (q_1) edge node[pos=0.48, above, sloped] {$\varepsilon \mid\hspace{-0.1em} \splt$} (q_2);
    \path[->] (q_2) edge node[pos=0.48, above, sloped] {$\forall\alpha \mid\hspace{-0.1em} \alpha$} (q_3);
    \path[->] (q_1) edge [loop above, out = 107.5, looseness = 7]  node[above] {$\forall\alpha \mid \alpha $} (q_1);
    \path[->] (q_2) edge [loop above, out = 107.5, looseness = 7]  node[above] {$\forall\alpha \mid \alpha $} (q_2);
  \end{tikzpicture}

  \caption{The “single mark” transducer, denoted~\Tsinglemark.}
  \label{fig:singlemark}
\end{figure}
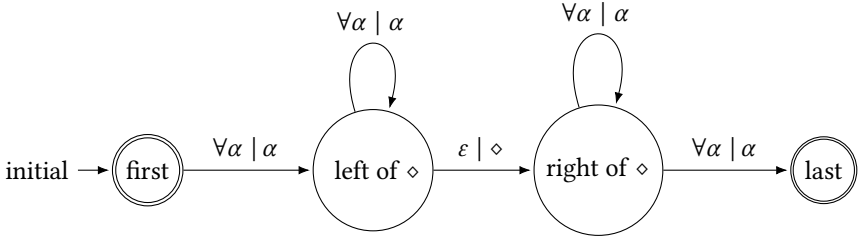

The “scissors” transducer, denoted \Tscissors{} (Figure~\ref{fig:scissors}),
works with the input alphabet~$B$ and the output alphabet~$A$.
It extracts from the input word a substring delimited by~$\splt$ symbols.
The relation induced by this transducer is characterized by~$u \splt v \splt w \to_{\small\Tscissors} v$ for all words~$u, w\in B^*$ and~$v\in A^*$.

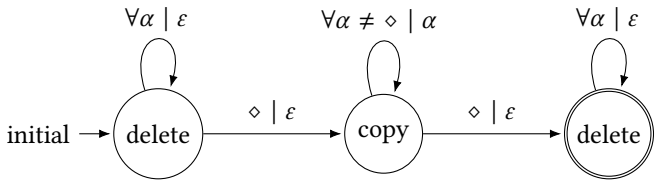
\begin{figure}[ht]
  \centering
  \begin{tikzpicture}[shorten >=1pt,node distance=3cm,on grid,auto,initial text=initial, >={Latex[scale=0.9]}]

    \node[state,initial]  (q_0)                {delete};
    \node[state]          (q_1) [right=of q_0] {copy};
    \node[state]          (q_2) [right=of q_1] {delete};
    \draw (q_2) circle (15.88pt);

    \path[->] (q_0) edge node[pos=0.48, above, sloped] {$\splt \mid \varepsilon$} (q_1);
    \path[->] (q_1) edge node[pos=0.48, above, sloped] {$\splt \mid \varepsilon$} (q_2);
    \path[->] (q_0) edge [loop above, out = 107.5, looseness = 7]  node[above] {$\forall\alpha \mid \varepsilon $} (q_0);
    \path[->] (q_1) edge [loop above, out = 107.5, looseness = 8.5]  node[above] {$\hspace{-0.2em} \forall\alpha \neq \splt \mid \alpha$} (q_1);
    \path[->] (q_2) edge [loop above, out = 107.5, looseness = 7]  node[above] {$\forall\alpha \mid \varepsilon $}  (q_2);
  \end{tikzpicture}

  \caption{The “scissors” transducer, denoted~\Tscissors.}
  \label{fig:scissors}
\end{figure}

Lastly, given~$a\in A$, the “bounded $a$-counter” transducer, denoted~\Tcounter{a}, works with the input alphabet~$A$ and the output alphabet~$A \cup \left\{ 1,2,3 \right\}$. It counts occurrences of~$a$, up to~3.
More precisely,
the relation induced by this transducer is finite and contains only the following ordered pairs:
$a \to_{\small\Tcounter{a}} 1a$,
$aa \to_{\small\Tcounter{a}} 2a$, and
$aaa \to_{\small\Tcounter{a}} 3a$.

\begin{figure}[ht]
  \centering

    \begin{tikzpicture}[shorten >=1pt,node distance=3cm,on grid,auto,initial text=initial, >={Latex[scale=0.9]}]

        \node[state,initial] (i) {};
        \node[state] (a11) [above=2cm of i] {};
        \node[state] (a21) [right=of i] {};
        \node[state] (a31) [below=2cm of i] {};
        \node[state] (a32) [below=2cm of a21] {};
        \node[state] (f) [right=of a21] {};
        \draw (f) circle (11.25pt);

        \path[->] (i) edge node[left] {$a \mid 1$} (a11)
                       edge node[above] {$a \mid 2$} (a21)
                       edge node[left] {$a \mid 3$} (a31);

        \path[->] (a11) edge[bend left=20] node[above] {$\varepsilon \mid a$} (f); 
        \path[->] (a21) edge node[above] {$a \mid a$} (f);
        \path[->] (a31) edge node[above] {$a \mid a$} (a32);
        \path[->] (a32) edge[bend right=25] node[below right] {$a \mid \varepsilon$} (f); 

    \end{tikzpicture}

  \caption{The “bounded $a$-counter” transducer, denoted~\Tcounter{a}.}
  \label{fig:counter}
\end{figure}
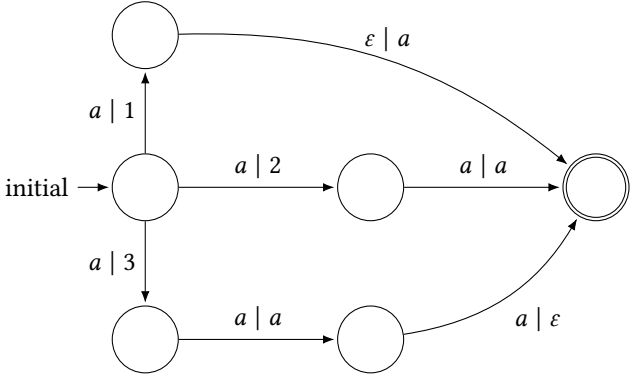

\subsection{Composition of transducers}
Let~$A$, $B$, and~$C$ be alphabets. Let~$\mathcal{U}$ be a transducer from the alphabet~$A$ to~$B$, and let~$\mathcal{V}$ be a transducer from~$B$ to~$C$.
We define, up to equivalence~$\equiv$, a composed transducer~$\mathcal{W}$, such that for all~$x\in A^*$ and~$y\in C^*$, \[ x \to_\mathcal{W} y \ \Leftrightarrow\ \exists \, z\in B^*, x \to_{\mathcal{U}} z \text{ and } z\to_{\mathcal{V}} y, \] see~\cite[Theorem~\textbf{3}.2.2]{Pin_2021}. We denote $\mathcal{W}$  as~$\mathcal{V} \circ \mathcal{U}$. If~$\mathcal{U}$ and~$\mathcal{V}$ induce partial functions (meaning that there is at most one output word for a given input word), the composition~$\mathcal{V}\circ\mathcal{U}$ also induces a partial function which is the composition of the previous ones. The powering notation~$\mathcal{U}^n$ denotes the $n$-fold composition~$\mathcal{U} \circ \dotsb \circ \mathcal{U}$. A detailed description of the construction is, subsequently, given in the implementation section~[\ref{section: Implementation}].

\subsection{Generators, recognizers, and filters}

Let~$\mathcal{T}$ be a transducer with an input alphabet~$A$ and an output alphabet~$B$. If the input alphabet $A$ is empty, we call~$\mathcal{T}$ a \emph{generator}. It is easy to see that the input language of~$\mathcal{T}$ corresponds to the singleton~$\{\varepsilon\}$. Consequently, on the unique input~$\varepsilon$, the transducer generates the complete output language. An example of a \emph{generator} is the ``source'' transducer~\Tgen{}, producing~$B^*$ (Figure~\ref{fig:sink-and-source}). Similarly, an empty output alphabet $B$ implies the only possible output word is~$\varepsilon$. In this case, we call~$\mathcal{T}$ a \emph{recognizer}: on every input~$u \in A^*$, $\mathcal{T}$ either rejects~$u$, or accepts it with output word~$\varepsilon$. An example is given by the ``sink'' transducer~\Trecog, recognizing~$A^*$. Finally, if $A=B$, and if on each transition of~$\mathcal{T}$ the input matches the output symbols, we say that~$\mathcal{T}$ is a \emph{filter}: on input~$u$, $\mathcal{T}$ either rejects~$u$, or accepts it with output word~$u$.

\begin{figure}[H]
  \centering
  \begin{tikzpicture}[shorten >=1pt,node distance=3cm,on grid,auto,initial text=initial, >={Latex[scale=0.9]}]

    \node[state,initial]  (sink)  {};
    \draw (sink) circle (11.5pt);
    \path[->] (sink) edge [loop above, out = 107.5, looseness = 7]  node[above] {$\forall\alpha \mid \varepsilon $} (sink);

    \node[state,initial]  (source)  at (4,0) {};
    \draw (source) circle (11.5pt);
    \path[->] (source) edge [loop above, out = 107.5, looseness = 7]  node[above] {$\varepsilon \mid \forall\alpha$} (source);

  \end{tikzpicture}

  \caption{The ``sink'' \Trecog{} and ``source'' \Tgen{} transducers.}
  \label{fig:sink-and-source}
\end{figure}
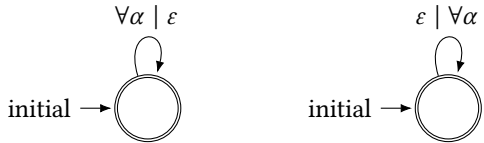

We can convert a recognizer or generator into a filter, and vice versa.
For example, if~$\mathcal{T}$ is a recognizer, its transitions have the form~$(q_1, a, \varepsilon, q_2)$,
which replaced by~$(q_1, a, a, q_2)$ turn~$\mathcal{T}$ into a filter with the same input language.
Furthermore, the composition~$\Trecog \circ \mathcal{T}$ ``deletes'' the output of~$\mathcal{T}$,
turning it into a recognizer for~$L_\text{in}(\mathcal{T})$;
similarly, the composition~$\mathcal{T} \circ \Tgen$ has the effect of feeding~$\mathcal{T}$  all possible inputs,
yielding a generator for~$L_\text{out}(\mathcal{T})$.

The transduction relation of generators (respectively recognizers and filters) is entirely characterized by the output (respectively input) language.
These kinds of transducers are essentially the same concept: the concept of \emph{automata}.
(In the terminology of \cite{HopcroftMotwaniUllman_2007}, our automata are “nondeterministic finite automata with $\varepsilon$-transitions”, or $\varepsilon$-NFA).
Contrary to general transducers, the equivalence of automata is decidable (see below). This is a key property of our proof.

\subsection{Minimal deterministic recognizers}

A recognizer~$\mathcal{T}$ is \emph{deterministic}
if:
\begin{itemize}
  \item it has a single initial state;
  \item it has no~$\varepsilon$-input transitions;
  \item for a given state~$s$ and a given symbol~$a$, there is at most one transition from~$s$ with the input symbol~$a$.
\end{itemize}
This corresponds to the usual definition of the \emph{deterministic finite automaton} (DFA).
Every recognizer is equivalent (in the sense of~$\equiv$)
to a deterministic recognizer through the power set construction \cite[\S2.3.5]{HopcroftMotwaniUllman_2007}.
Moreover, among all deterministic recognizers of a given language \( L \subseteq A^* \), there exists a unique one with the minimal number of states, up to state relabeling \cite[\S 4.4.4]{HopcroftMotwaniUllman_2007}. Given a recognizer for~$L$, there exist various algorithms to compute the associated minimal recognizer. We implemented Brzozowski's algorithm~\cite{Brzozowski_1962} \cite[Chapter~10]{Pin_2021} because of its simplicity. Due to the uniqueness of minimal automata, we can decide the equivalence of two recognizers, \(\mathcal{T}\) and \(\mathcal{T}'\), by checking the equivalence of their corresponding minimal recognizers.

\section{The cosmological theorem}
The first step in proving the cosmological theorem is the formulation of the audioactive derivation, in terms of a transducer. Through the subsequent study of \emph{splittings}, the proof becomes a low-hanging fruit.

\subsection{The audioactive transducer}
Let~$\mathbb{N}_>$ denote the set of positive integers and let $\mathbb{N}_>^*$ be the set of all finite sequences over $\mathbb{N}_>$.
Audioactive derivation is then a map~$C : \mathbb{N}_>^*\to \mathbb{N}_>^*$. A sequence obtained after $n$ applications of $C$, denoted as $x \in C^n(\mathbb{N}_>^*)$, is called a day-$n$ sequence.

The map $C$ cannot be induced by a transducer, as the associated input and output alphabets, $\mathbb{N}_>$, are not finite. Besides this trivial reason, a transducer has a finite number of states, and thus cannot count to arbitrarily large values. We remark, however, that we can make use of the ``one-day theorem''~\cite[p.~10]{Conway_1986}.

\begin{theorem}[One-Day Theorem]\label{thm:day-one-theorem}
    No day-$one$ sequence~$x \in C(\mathbb{N}_>^*)$~contains four consecutive equal symbols, that is no~``aaaa'' subword.
\end{theorem}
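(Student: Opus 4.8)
The plan is to exploit the rigid alternating structure of a day-one sequence. The first step is to spell out $C(w)$ explicitly. Writing any nonempty $w \in \mathbb{N}_>^*$ as a concatenation of maximal runs $s_1^{c_1} s_2^{c_2} \cdots s_m^{c_m}$, where $c_i \geq 1$ is the length of the $i$th run and consecutive run-symbols are distinct ($s_i \neq s_{i+1}$, by maximality), the audioactive derivative is the sequence
\[ C(w) = (c_1,\, s_1,\, c_2,\, s_2,\, \ldots,\, c_m,\, s_m), \]
where, following the convention in the footnote, each count $c_i$ is a single element of $\mathbb{N}_>$. (The empty word gives $\varepsilon$ and is trivial.) Consequently every day-one sequence is, by construction, strictly alternating between count-entries at the odd positions and symbol-entries at the even positions: position $2k-1$ holds $c_k$ and position $2k$ holds $s_k$.

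The second step is a window argument. Consider any four consecutive entries of $C(w)$, occupying positions $p, p+1, p+2, p+3$. If $p = 2k-1$ is odd, the window reads $c_k, s_k, c_{k+1}, s_{k+1}$; if $p = 2k$ is even, it reads $s_k, c_{k+1}, s_{k+1}, c_{k+2}$. In either case the two symbol-entries inside the window are the consecutive run-symbols $s_k$ and $s_{k+1}$, since in an alternating sequence the two even positions of a length-four window always differ by exactly $2$.

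The third step concludes. Because runs are maximal, $s_k \neq s_{k+1}$, so the four entries of the window cannot all coincide; hence $C(w)$ contains no ``aaaa'' subword. I do not expect a genuine obstacle: the argument is a short case analysis, and the boundary windows at the start and end of $C(w)$ are handled uniformly, as a length-four window always straddles exactly two adjacent runs. The only point demanding care is the convention, highlighted in the footnote, that a multi-digit count is one symbol of $\mathbb{N}_>$ rather than its string of digits; without it the argument would break, since the digits of a large count could repeat. It is worth remarking that the bound is tight, as three equal consecutive symbols do occur---for instance $C(333444) = (3,3,3,4)$---so the analogous statement forbidding ``aaa'' is false.
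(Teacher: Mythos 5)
Your proof is correct and follows essentially the same route as the paper's: both rest on the observation that a day-one sequence alternates counts and run-symbols, that consecutive run-symbols differ by maximality of runs, and that any length-four window contains two consecutive symbol positions. Yours simply spells out the run decomposition and the window case analysis that the paper's two-sentence proof leaves implicit.
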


\begin{proof}
    By definition of the map $C$, all pairwise consecutive odd positions in $x$, indexed from $0$, must differ. A subword of length four, however, would contain two consecutive equal odd positions.
\end{proof}

After the first audioactive derivation,  all subsequent derivations will only need to count up to three.
We still have the problem of an infinite alphabet, but it is only apparent. Indeed, we have seen that a day-one word~$x$ contains no~$aaaa$ subword,
so its derivation contains no~$44$,~$55$, or~$aa$ subwords with~$a \geq 4$,
as one of the two symbols comes from counting consecutive occurrences of the same symbol in~$x$. Therefore, symbols~$d \geq 4$ never mutually interact past the first derivation and are simply carried over. As such, we simply denote them~$d$.

We, hence, consider the alphabet~$A = \left\{ 1, 2, 3, d \right\}$. Let~$\dayone \subset A^*$ be the set of words not containing any subword of the form~$aaaa$. Audioactive derivation, then, induces a map~$C : \dayone \to \dayone$, entirely described by a transducer (Figure~\ref{fig:audioactive}) that uses~$A$ as both the input and output alphabet.
By virtue of the one-day theorem, it is enough to study the iterations of~$C$ on~$\dayone$ to establish the cosmological theorem.

\begin{figure}[H]
  \centering
  \begin{tikzpicture}[shorten >=1pt,node distance=2cm,on grid,auto,initial text=initial, >={Latex[scale=0.9]}]

    \node[state]  (1)                {};
    \draw (1) circle (11.7pt);    
    \node[state, right=5cm of 1]  (n1)                {};
    \path[->, dotted, thick, >={Latex[scale=0.75]}] (1) edge
    node[below] {\tiny $1\mid 11$\qquad $11 \mid 21$\qquad $111\mid 31$}
    node {\Tcounter{1}} (n1);

    \node[state, below=of 1]  (2)                {};
    \draw (2) circle (11.7pt);    
    \node[state, right=5cm of 2]  (n2)                {};
    \path[->, dotted, thick, >={Latex[scale=0.75]}] (2) edge node {\Tcounter{2}}
    node[below] {\tiny $2 \mid 12$\qquad $22 \mid 22$\qquad $222 \mid 32$}
    (n2);

    \node[state, below=of 2]  (3)                {};
    \draw (3) circle (11.7pt); 
    \node[state, right=5cm of 3]  (n3)                {};
    \path[->, dotted, thick, >={Latex[scale=0.75]}] (3) edge node {\Tcounter{3}}
    node[below] {\tiny $3\mid 13$\qquad $33\mid 23$\qquad $333\mid 33$}
    (n3);

    \node[state, below=of 3]  (d)                {};
    \draw (d) circle (11.7pt);    
    \node[state, right=5cm of d]  (nd)                {};
    \path[->, dotted, thick, >={Latex[scale=0.75]}] (d) edge node {\Tcounter{d}}
    node[below] {\tiny $d \mid 1d$\qquad $dd \mid 2d$\qquad $ddd \mid 3d$}
    (nd);

    \node[state, below right=1.41cm of n1] (n12) {};
    \node[state, initial, below left=1.41cm of 1] (p12) {};
    \node[state, below right=1.41cm of n3] (n3d) {};
    \node[state, initial, below left=1.41cm of 3] (p3d) {};

    \node (mid) at (2.5,-3) {};

    \path (n12.south) edge[in=30, out=-90, looseness=1.5, shorten >= 0pt] (mid.east);
    \path[->] (mid.east) edge[out=210, in=90, looseness=1.5] (p3d.north);

    \path (n3d.north) edge[in=-30, out=90, looseness=1.5, shorten >= 0pt] (mid.east);
    \path[->] (mid.east) edge[out=-210, in=-90, looseness=1.5] (p12.south);

    \path[->] (n1) edge (2);
    \path[->] (n2) edge (1);
    \path[->] (n3) edge (d);
    \path[->] (nd) edge (3);
    \path[->] (p12) edge (1);
    \path[->] (p12) edge (2);
    \path[->] (p3d) edge (3);
    \path[->] (p3d) edge (d);
    \path[->] (n1) edge (n12);
    \path[->] (n2) edge (n12);
    \path[->] (n3) edge (n3d);
    \path[->] (nd) edge (n3d);

  \end{tikzpicture}

  \caption{The “audioactive” transducer, denoted~\Taudio, with $28$~states.
    The dotted edges, labelled by a “bounded counter”, should be substituted accordingly, identifying the source state with the initial state of the transducer, and the target state with its final state. All other edges have the~$\varepsilon$ input and output symbols, omitted for legibility. The choice of layout---with a group for the symbols~1 and~2, and another for~3 and~$d$---is only cosmetic, reducing edge crossings.
  }
  \label{fig:audioactive}
\end{figure}
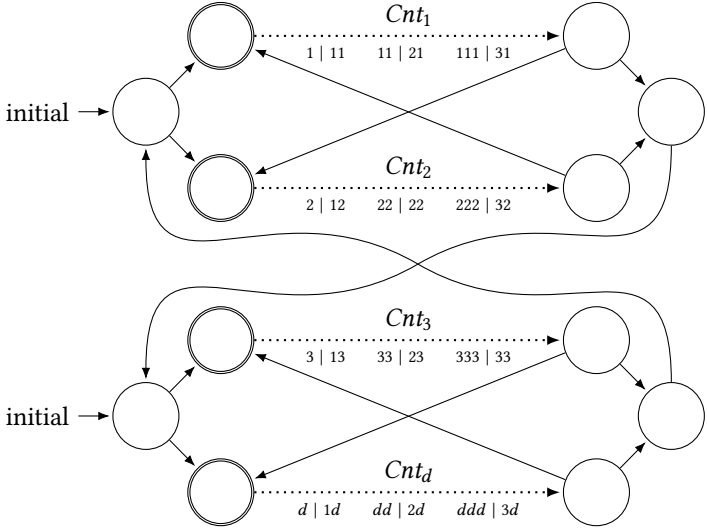

\subsection{Splittings}

Let~$C^n(x)$ denote the~$n$th audioactive derivation of a word~$x \in \dayone$.
A word~$x \in \dayone$ \emph{splits} into~$u_1 \dotsb u_r$ if~$C^n(x) = C^n(u_1) \dotsb C^n(u_r)$ for all~$n \geq 0$, meaning that the~$u_i$ do not interact.
This happens exactly when the last digit of~$C^n(u_1 \dotsb u_i)$ is different from the first digit of~$C^n(u_{i+1} \dotsb u_r)$
for all~$n \geq 0$ and all~$1 \leq i < r$.
We say that each~$u_i$ is a \emph{splitting factor} of~$x$.
A nonempty word which admits a single nontrivial splitting factor is an \emph{atom}.
A splitting factor which is an atom is called an \emph{atomic factor}.
It is easy to check that every nonempty word admits a unique splitting into atoms.
For example:
\begin{itemize}
  \item $32212$ splits into $3 \cdot 2212$,\footnote{This is not trivial! This follows from the observation that all the derivations of~3 end with~3, while all the derivations of~2212 start with 2, see Conway's Starting Theorem \cite{Conway_1986}.
    We can also check that the word “3\splt 2212” is accepted by the recognizer \Rsplit\ introduced below.}
  \item $32213$ splits into $3 \cdot 2213$,
  \item $3221d$ splits into $3 \cdot 221d$,
  \item $32211$ is an atom,\footnote{The only possible splittings would be $3\cdot 2211$, which does not work after two derivations, and~$322\cdot 11$, which does not work after one derivation.}
  \item for all~$n \geq 1$, the~$n$-time concatenation of 332 is an atom,
  showing that there are infinitely many atoms.\footnote{This follows from the cycle formed by the states \texttt{w}, \texttt{n}, and~\texttt{d}
    in the atom recognizer that we will construct below (Table~\ref{tab:Rirred}).}
\end{itemize}
(The first three assertions can be checked with the \Rsplit\ automaton, while the last two can be checked with the \Rirred\ automaton, both introduced below.)

Splitting is subtle. At first glance, there may seem to be infinitely many conditions to check.
Yet, we will see that splittings can be recognized by an automaton.
To this end, we consider the augmented alphabet~$B = \left\{1,2,3,d,\splt\right\}$.
A word~$u_1 \splt \dotsb \splt u_r$ over~$B$ is a \emph{splitting}
if~$u_1 \dotsb u_r \in \dayone$ and $C^n(u_1\dotsb u_r) = C^n(u_1)\dotsb C^n(u_r)$
for all $n\geq 0$. The set of splittings forms a language over $B$, and we now construct its associated recognizer, $\Rsplit$.

The “augmented audioactive” transducer, denoted~\Taa{} (Figure~\ref{fig:audioactive+}),
extends~\Taudio{} by using~$B$ as both the input and output alphabet. Reading~\splt{} in an accept state, the~\Taa{} transducer outputs the same symbol and remains in the same state.
Conversely, when~\splt{} is read in a non-accepting state, the entire input word is rejected. For example, 22\splt22 is rejected, but~$22\splt33 \to_{\Taa} 22\splt 23$.
The input language of~\Taa{} is the set of words with no $aaaa$ subword (with~$a\in A$), and no~$a\splt^+ a$ subword (with~$a\in A$, where~$\splt^+$ means one or more~$\splt$).

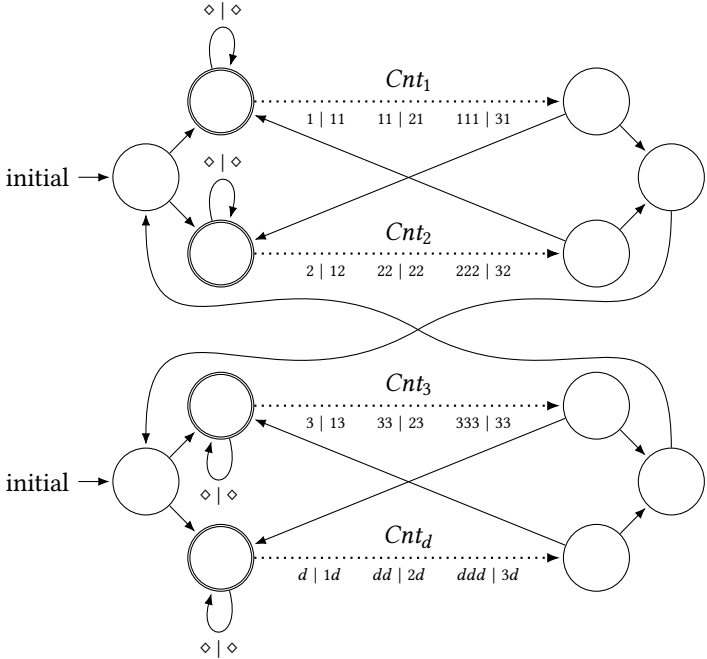
\begin{figure}[th]
  \centering
  \begin{tikzpicture}[shorten >=1pt,node distance=2cm,on grid,auto,initial text=initial, >={Latex[scale=0.9]}]
    \node[state]  (1)                {};
    \draw (1) circle (11.7pt);     
    \node[state, right=5cm of 1]  (n1)                {};
    \path[->, dotted, thick, >={Latex[scale=0.75]}] (1) edge
    node[below] {\tiny $1\mid 11$\qquad $11\mid 21$\qquad $111\mid 31$}
    node {\Tcounter{1}} (n1);

    \node[state, below=of 1]  (2)                {};
    \draw (2) circle (11.7pt);     
    \node[state, right=5cm of 2]  (n2)                {};
    \path[->, dotted, thick, >={Latex[scale=0.75]}] (2) edge node {\Tcounter{2}}
    node[below] {\tiny $2\mid 12$\qquad $22\mid 22$\qquad $222\mid 32$}
    (n2);

    \node[state, below=of 2]  (3)                {};
    \draw (3) circle (11.7pt);  
    \node[state, right=5cm of 3]  (n3)                {};
    \path[->, dotted, thick, >={Latex[scale=0.75]}] (3) edge node {\Tcounter{3}}
    node[below] {\tiny $3\mid 13$\qquad $33\mid 23$\qquad $333\mid 33$}
    (n3);

    \node[state, below=of 3]  (d)                {};
    \draw (d) circle (11.7pt);  
    \node[state, right=5cm of d]  (nd)                {};
    \path[->, dotted, thick, >={Latex[scale=0.75]}] (d) edge node {\Tcounter{d}}
    node[below] {\tiny $d\mid 1d$\qquad $dd\mid 2d$\qquad $ddd\mid 3d$}
    (nd);

    \node[state, below right=1.41cm of n1] (n12) {};
    \node[state, initial, below left=1.41cm of 1] (p12) {};
    \node[state, below right=1.41cm of n3] (n3d) {};
    \node[state, initial, below left=1.41cm of 3] (p3d) {};

    \node (mid) at (2.5,-3) {};

    \path (n12.south) edge[in=30, out=-90, looseness=1.5, shorten >= 0pt] (mid.east);
    \path[->] (mid.east) edge[out=210, in=90, looseness=1.5] (p3d.north);

    \path (n3d.north) edge[in=-30, out=90, looseness=1.5, shorten >= 0pt] (mid.east);
    \path[->] (mid.east) edge[out=-210, in=-90, looseness=1.5] (p12.south);

    \path[->] (n1) edge (2);
    \path[->] (n2) edge (1);
    \path[->] (n3) edge (d);
    \path[->] (nd) edge (3);
    \path[->] (p12) edge (1);
    \path[->] (p12) edge (2);
    \path[->] (p3d) edge (3);
    \path[->] (p3d) edge (d);
    \path[->] (n1) edge (n12);
    \path[->] (n2) edge (n12);
    \path[->] (n3) edge (n3d);
    \path[->] (nd) edge (n3d);

    \path[->] (1) edge [loop above]  node[above] {$\hspace{0.15em}\small \splt \hspace{0.1em} \scriptstyle{\mid} \hspace{0.1em} \splt $} (p12);
    \path[->] (2) edge [loop above]  node[above] {$\hspace{0.15em}\small \splt \hspace{0.1em} \scriptstyle{\mid} \hspace{0.1em} \splt $} (p3d);
    \path[->] (3) edge [loop below]  node[below] {$\hspace{-0.15em}\small \splt \hspace{0.1em} \scriptstyle{\mid} \hspace{0.1em} \splt $}(p12);
    \path[->] (d) edge [loop below]  node[below] {$\hspace{-0.15em}\small \splt \hspace{0.1em} \scriptstyle{\mid} \hspace{0.1em} \splt $} (p3d);
  \end{tikzpicture}

  \caption{The augmented audioactive transducer $\Taa$.}
  \label{fig:audioactive+}
\end{figure}

\begin{lemma}\label{lem:splitting}
  A word $x \in B^*$
  is a splitting if and only if~$\Taa^n$ accepts~$x$ for all~$n\geq 0$.
\end{lemma}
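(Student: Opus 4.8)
The plan is to reduce the statement to a single-step analysis of $\Taa$ and then iterate. First I would record that $\Taa^0$ is the identity transducer, so it accepts every $x\in B^*$; hence ``$\Taa^n$ accepts $x$ for all $n\ge 0$'' is equivalent to the same condition for all $n\ge 1$. Writing $x\in B^*$ as $x=u_1\splt\dotsb\splt u_r$ where the $u_i\in A^*$ are the maximal $\splt$-free blocks, the heart of the argument is the following \emph{single-step claim}: if $x$ lies in the input language of $\Taa$ (no $aaaa$ subword and no $a\splt^+a$ subword), then $\Taa$ produces the unique output $C(u_1)\splt\dotsb\splt C(u_r)$.

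To prove this claim I would trace the accepting runs of $\Taa$. On each block $u_i$ the transducer behaves exactly as $\Taudio$, and since $u_i$ has no $aaaa$ subword, $\Taudio$ accepts $u_i$ with the unique output $C(u_i)$: the counters $\Tcounter{a}$ pin down the run lengths and the inter-run transitions are forced by the following input symbol. The placement and reading of the $\splt$ symbols is the only subtlety. By construction $\Taa$ may read a $\splt$ only in an accepting state, where it copies the symbol and stays put. Now a state of $\Taudio$ is accepting precisely when the machine is poised to begin a fresh run of some symbol $a$ distinct from the symbol of the run just completed; so reading $\splt$ there and then continuing into $u_{i+1}$ is possible exactly when the first symbol of $u_{i+1}$ differs from the last symbol of $u_i$. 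This is exactly the $a\splt^+a$-free condition guaranteeing membership in the input language, and it confirms that the accepting run replaces each block $u_i$ by $C(u_i)$ while leaving every $\splt$ in place, giving output $C(u_1)\splt\dotsb\splt C(u_r)$.

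With the single-step claim in hand, I would prove by induction on $n\ge 1$ that $\Taa^n$ accepts $x$ if and only if $u_1\dotsb u_r\in\dayone$ and, for every $0\le k<n$ and every $1\le i<r$, the last symbol of $C^k(u_i)$ differs from the first symbol of $C^k(u_{i+1})$; moreover in that case the output of $\Taa^n$ on $x$ is $C^n(u_1)\splt\dotsb\splt C^n(u_r)$. The base case $n=1$ is exactly the stated description of the input language of $\Taa$ combined with the single-step claim. For the inductive step I apply the single-step claim to $\Taa^{n-1}(x)=C^{n-1}(u_1)\splt\dotsb\splt C^{n-1}(u_r)$: the no-$aaaa$ requirement on the blocks $C^{n-1}(u_i)$ is automatic by the One-Day Theorem (Theorem~\ref{thm:day-one-theorem}) as soon as $n-1\ge 1$, so the only new constraint contributed by the $n$th application is the level-$(n-1)$ non-merging condition. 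Letting $n\to\infty$, $\Taa^n$ accepts $x$ for all $n$ exactly when $u_1\dotsb u_r\in\dayone$ and the non-merging condition holds at every level $k\ge 0$, which is precisely the definition of $x$ being a splitting, via the boundary characterization of splitting given above.

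The main obstacle is the single-step claim, and inside it the identification of ``$\Taa$ reads a $\splt$ in an accepting state'' with ``adjacent blocks do not merge after one derivation''; this rests on reading off the semantics of the states of $\Taudio$, namely that each accepting state encodes the symbol about to open the next run. A secondary point requiring care is that the no-$aaaa$ constraint is genuinely needed only at level $0$ and is then supplied automatically at all later levels by the One-Day Theorem—keeping track of this is what lets the induction advance by a single new condition per step. Finally, the degenerate cases of empty blocks $u_i$ and of leading, trailing, or consecutive $\splt$'s should be checked to be consistent with the stated $a\splt^+a$-free description of the input language, though they do not affect the main line of reasoning.
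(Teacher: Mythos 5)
Your proof is correct and follows essentially the same route as the paper's: the paper compresses into two sentences exactly what you prove in detail, namely that $\Taa$ performs one audioactive derivation blockwise while copying the $\splt$ markers, rejecting precisely when a marker sits between two copies of a symbol (or a run of four occurs), so that acceptance by $\Taa^n$ for every $n$ is the non-merging condition at every level, i.e., the definition of a splitting. Your single-step uniqueness claim, the induction, and the use of the One-Day Theorem to dispose of the $aaaa$ condition at levels $\geq 1$ simply make explicit what the paper's proof leaves implicit.
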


\begin{proof}
  The condition that~$x\in L_\text{in}(\Taa^n)$ for all~$n\geq 0$ means that
  the~$\splt$ symbols (or groups of consecutive~\splt{} symbols) will never lie between two copies of a symbol of~$A$ after any number of audioactive derivations. This is exactly the definition of a splitting.
\end{proof}

\begin{theorem}[Splitting Theorem]\label{thm:splittings}
  The set of splittings is the input language of~$\Taa^9$.
\end{theorem}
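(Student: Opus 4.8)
The plan is to combine Lemma~\ref{lem:splitting} with a single finite computation and a fixed-point argument that upgrades one coincidence into stabilization of the entire chain. By Lemma~\ref{lem:splitting}, the set of splittings equals $\bigcap_{n \geq 0} L_\text{in}(\Taa^n)$, so it suffices to show that this intersection coincides with $L_\text{in}(\Taa^9)$. First I would record that the family $\left(L_\text{in}(\Taa^n)\right)_{n \geq 0}$ is nonincreasing: membership $x \in L_\text{in}(\Taa^{n+1})$ witnesses a run $x = z_0 \to_{\Taa} z_1 \to_{\Taa} \dotsb \to_{\Taa} z_{n+1}$, and truncating after $n$ steps shows $x \in L_\text{in}(\Taa^n)$. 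Hence $\bigcap_{n} L_\text{in}(\Taa^n) \subseteq L_\text{in}(\Taa^9)$ holds automatically, and the real content is the reverse inclusion $L_\text{in}(\Taa^9) \subseteq L_\text{in}(\Taa^n)$ for every $n$.

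The key structural remark is that $L_\text{in}(\Taa^{n+1})$ depends on $L_\text{in}(\Taa^n)$ alone. Writing $P(L) = \left\{ x \in B^* : x \to_{\Taa} z \text{ for some } z \in L \right\}$ for the preimage operator of the relation $\to_{\Taa}$, the identity $\Taa^{n+1} \equiv \Taa^n \circ \Taa$ together with the definition of composition gives $L_\text{in}(\Taa^{n+1}) = P\!\left(L_\text{in}(\Taa^n)\right)$, with $L_\text{in}(\Taa^0) = B^*$. Since $P$ is monotone for inclusion, a single coincidence propagates: if $L_\text{in}(\Taa^9) = L_\text{in}(\Taa^{10})$, then $L_\text{in}(\Taa^9) = P\!\left(L_\text{in}(\Taa^9)\right)$ is a fixed point of $P$, and an immediate induction yields $L_\text{in}(\Taa^n) = L_\text{in}(\Taa^9)$ for all $n \geq 9$. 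Combined with the nonincreasing property for small $n$, the intersection collapses to $L_\text{in}(\Taa^9)$, which is exactly the claim.

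It therefore remains only to verify the single equality $L_\text{in}(\Taa^9) = L_\text{in}(\Taa^{10})$, which I would carry out by computer. I would form the recognizers $\Trecog \circ \Taa^9$ and $\Trecog \circ \Taa^{10}$, which discard the output and accept precisely $L_\text{in}(\Taa^9)$ and $L_\text{in}(\Taa^{10})$; minimize each with Brzozowski's algorithm; and compare the two minimal deterministic recognizers, which by uniqueness of the minimal DFA decides the equality. It is here that decidability of automaton equivalence—available precisely because $\Trecog \circ \Taa^n$ is an automaton rather than a general transducer—makes the comparison a genuine proof rather than heuristic evidence.

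The main obstacle is computational rather than conceptual. Naive iterated composition makes the number of states grow very quickly, so the composed transducers must be minimized after each composition step to keep them tractable (here, up to several thousand states); without aggressive minimization the construction of $\Taa^{10}$ is infeasible. A secondary point to treat carefully is the justification of $\Taa^{n+1} \equiv \Taa^n \circ \Taa$ at the level of input languages, so that the operator identity $L_\text{in}(\Taa^{n+1}) = P\!\left(L_\text{in}(\Taa^n)\right)$ is rigorous; this is a direct unwinding of the composition relation and the definition of $L_\text{in}$, but it is the hinge on which the stabilization argument turns.
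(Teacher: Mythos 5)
Your proposal is correct and follows essentially the same route as the paper: identify the splittings with $\bigcap_n L_\text{in}(\Taa^n)$ via Lemma~\ref{lem:splitting}, note the chain is nonincreasing, verify $L_\text{in}(\Taa^9) = L_\text{in}(\Taa^{10})$ by minimizing and comparing the recognizers $\Trecog \circ \Taa^9$ and $\Trecog \circ \Taa^{10}$, and propagate that single coincidence to all $n \geq 9$. Your explicit fixed-point argument via the preimage operator $P$ is precisely the justification the paper leaves implicit in its ``Thus'' step; the only (harmless) inaccuracy is the anticipated size of the minimized recognizers, which in fact never exceed about 40 states.
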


\begin{proof}
  Let~$L_n$ denote the input language of~$\Taa^n$, which is the language recognized by~$\Trecog \circ \Taa^{n}$. By Lemma~\ref{lem:splitting}, the set of splittings is the intersection of all~$L_n$ with~$n \geq 1$. Since $\Taa^{n+1} \equiv \Taa \circ \Taa^n$, it follows that~$L_{n+1} \subseteq L_n$ for all~$n \geq 0$.

  We compute $\Trecog \circ \Taa^9$ and~$\Trecog \circ \Taa^{10}$ and verify computationally that they are equivalent. Thus, $\Trecog \circ \Taa^{n} \equiv \Trecog \circ \Taa^9$ for all~$n \geq 9$, leading to the conclusion that~$\cap_{n \geq 1} L_n = L_9$.
\end{proof}

In terms of computation, we compute~$\Trecog \circ \Taa^n$ using the recurrence relation
\[
\Trecog \circ \Taa^{n+1} = \left(\Trecog \circ \Taa^n\right) \circ \Taa,
\]
minimizing the automata at each step. Given that~\Taa\ has 28 states, a naive computation of~$\Trecog \circ \Taa^{10}$ could lead to an automaton with $28^{10}$ states, likely exhausting the memory of a laptop. However, through iterative minimization, the number of states in~$\Trecog \circ \Taa^n$ never exceeds~40 (see Table~\ref{tab:t_mdr_ln}), and the computation time is below~10\,ms on a standard laptop.

\begin{table}[ht]
  \caption{\centering Number of states of  of $\Trecog\circ\Taa^n$ after determinization and minimization.}

  \small\centering
  \begin{tabular}[c]{rrrrrrrrrrr}
    \toprule
    $n$ & {1} & {2} & {3} & {4} & {5} & {6} & {7} & {8} & {$\geq 9$}  \\
    \midrule
    \# states & 13 & 25 & 37 & 40 & 37 & 29 & 28 & 27 & 21\\
    \bottomrule
  \end{tabular}
  \medskip
  \label{tab:t_mdr_ln}
\end{table}

Following Theorem~\ref{thm:splittings}, we define~$\Rsplit \equiv \Trecog \circ \Taa^9$, a recognizer for the language of all splittings. After determinization and minimization, $\Rsplit$ consists of 21 states (see Table~\ref{tab:tsplit}).

\begin{table}[ht]
  \caption{The splitting recognizer \Rsplit{}, after determinization and minimization. The set of states is~$Q = \left\{ \texttt{S}, \texttt{a},\dotsc, \texttt{t} \right\}$, the input alphabet is~$\left\{ 1,2,3, d, \splt\right\}$.}

  \small\centering
  \setlength{\tabcolsep}{2pt}
  \begin{tabular}{l@{\quad}*{21}{>{\ttfamily}c}}\toprule
    state & S & a & b & c & d & e & f & g & h & i & j & k & l & m & n & o & p & q & r & s & t\\
    initial & $\bullet$ &  &   &   &   &   &   &   &   &   &   &   &   &   &   &   &   &   &   &   &  \\
    final  & $\bullet$ & $\bullet$ & $\bullet$ & $\bullet$ & $\bullet$ & $\bullet$ & $\bullet$ & $\bullet$ & $\bullet$ & $\bullet$ & $\bullet$ & $\bullet$ & $\bullet$ & $\bullet$ &   & $\bullet$ &   &   & $\bullet$ & $\bullet$ &  \\
    \midrule
    input $1$  & a & b & c &  & a & a & a & a & a & a & a & a & a &  &  & p & q & c & b & a & \\
    input $2$  & d & d & d & d & e & f &  & d & d & d & d & d & d & n & o &  & f &  & e & d & \\
    input $3$  & g & g & g & g & g & g & g & h & i &  & g & g & g &  &  & r & i &  &  & g & \\
    input $d$  & j & j & j & j & j & j & j & j & j & j & k & l &  &  &  & s & l &  & k & t & l\\
    input $\splt$  & S & m & m & m & o & o & o & m & m & m & l & l & l & m &  & o &  &  & m & l & \\
    \bottomrule
  \end{tabular}

  \label{tab:tsplit}
\end{table}

Conway also provided an explicit description of splittings, which offers an intriguing comparison to our own. Below, we reproduce Conway's splitting theorem \cite[p.~11]{Conway_1986} \emph{verbatim}, intentionally omitting the details of the intricate notations. As Conway himself noted, ``this heap of conventions makes it hard to check the proofs, since they cover many more cases than one naively expects.'' Nevertheless, we offer some insight into how this statement relates to our 21-state automaton, $\Rsplit$.

The exponent~9 in Theorem~\ref{thm:splittings} is optimal, as shown by the word~$3\splt 133$ which has the following sequence of derivations:
\begin{align*}
3 &\splt 1 3 3 \\
1 3 &\splt 1 1 2 3 \\
1 1 1 3 &\splt 2 1 1 2 1 3 \\
3 1 1 3 &\splt 1 2 2 1 1 2 1 1 1 3 \\
1 3 2 1 1 3 &\splt 1 1 2 2 2 1 1 2 3 1 1 3 \\
1 1 1 3 1 2 2 1 1 3 &\splt 2 1 3 2 2 1 1 2 1 3 2 1 1 3 \\
3 1 1 3 1 1 2 2 2 1 1 3 &\splt 1 2 1 1 1 3 2 2 2 1 1 2 1 1 1 3 1 2 2 1 1 3 \\
1 3 2 1 1 3 2 1 3 2 2 1 1 3 &\splt 1 1 1 2 3 1 1 3 3 2 2 1 1 2 3 1 1 3 1 1 2 2 2 1 1 3 \\
1 1 1 3 1 2 2 1 1 3 1 2 1 1 1 3 2 2 2 1 1 3 &\splt 3 1 1 2 1 3 2 1 2 3 2 2 2 1 1 2 1 3 2 1 1 3 2 1 3 2 2 1 1 3.
\end{align*}
So this word is accepted by~$\Taa^8$ but not~$\Taa^9$.

\begin{center}
  \includegraphics[width=\textwidth]{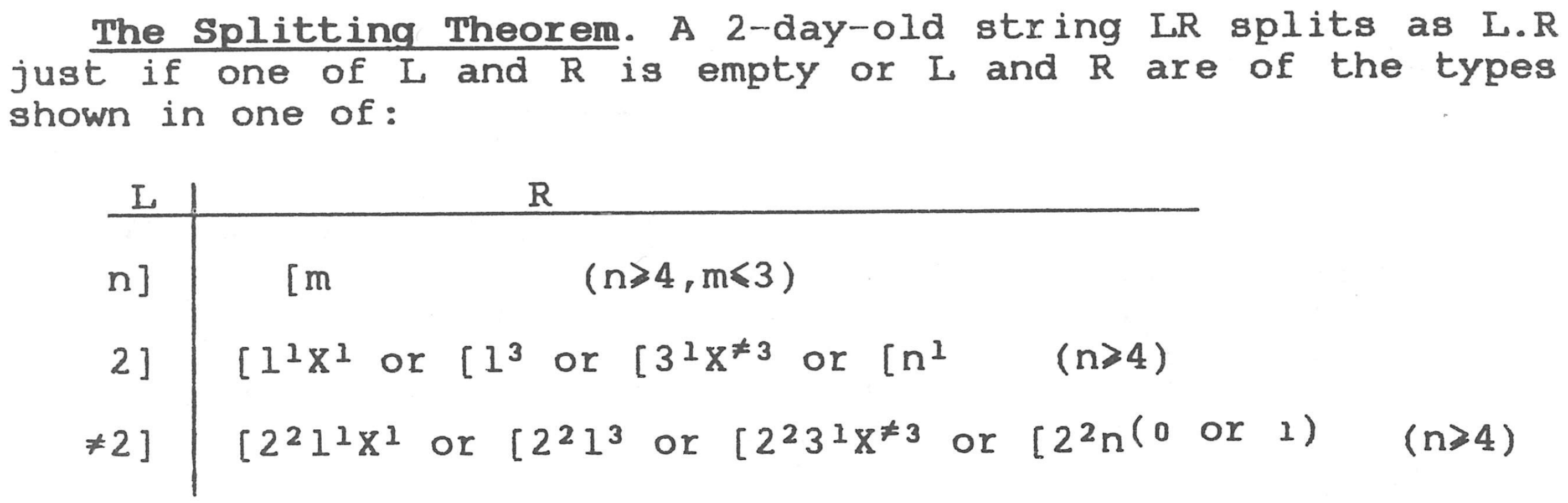}
\end{center}

We briefly outline the structure of the splitting recognizer (Table~\ref{tab:tsplit}). The states~\texttt{S}, \texttt{a}, \ldots, \texttt{l} form a recognizer for the language~$\dayone$. This component of the automaton counts consecutive identical symbols, up to a maximum of three, and rejects all words, containing four or more consecutive equal symbols. For instance, the state~\texttt{c}, reached after reading three consecutive~1s, does not accept an additional~1, as it has no transition for the input~1. When \Rsplit\ encounters the symbol~$\splt$, it transitions to the second part of the automaton, comprised of states~\texttt{l}, \ldots, \texttt{t}. This part has three entry points:
\begin{itemize}
  \item The state~\texttt{l} is reached from the states~\texttt{j}, \texttt{k}, or~\texttt{l}, where the last input received is~$d$. This state corresponds to the first line of Conway's statement and only accepts a digit 1, 2, or~3.
  \item The state~\texttt{o} is reached from the states~\texttt{d}, \texttt{e}, or~\texttt{f}, where the last input received is~2. This state corresponds to the second line of Conway's statement.
  \item The state~\texttt{m} is reached from the states where the last input received is~1 or~3. It corresponds to the third line of Conway's statement. This state only accepts the input~22 and transitions to the state~\texttt{o}, reflecting the similarities in the structure of the second and third lines of Conway's statement.
\end{itemize}

For example, consider the input word~$3\splt 22123$. This matches what Conway denotes as “$\neq 2\rbrack\, \lbrack2^2 1^1 X^1$.” On this input, \Rsplit\ will transition through the states \texttt{S}, \texttt{g}, \texttt{m}, \texttt{n}, \texttt{o}, \texttt{p}, \texttt{f}, or~\texttt{g}. The final state is accepting, indicating that~$3\splt 22123$ constitutes a valid splitting. Next, consider the input word~$3\splt 22122$. This sequence does not match “$\neq 2\rbrack\, \lbrack2^2 1^1 X^1$” because the final~2 cannot be ignored, as it is repeated. (The conventions are truly subtle.) On this input, \Rsplit\ will follow the same sequence of states as before, except at state~\texttt{f}, which will reject the final input symbol~2. Therefore,~$3\splt 22122$ is not a valid splitting. We can explicitly verify the derivation sequence:
\[
3\splt 22122 \to 13\splt 221122 \to 1113\splt 222122 \to 3113\splt 321122,
\]
where we observe the digits before and after the~$\splt$ are identical, thus violating the splitting condition.

From the~\Rsplit{}~recognizer, we can now construct a recognizer for the set of irreducible words.
Consider the recognizer~$\Rsplit \circ \Tsinglemark$
(where~$\Tsinglemark$ is the “single mark” transducer in  Figure~\ref{fig:singlemark}), it accepts exactly the empty word, and words~$uv$, such that~$u\splt v$ is a splitting, with~$u$ and~$v$ nonempty words over~$A$.
Put differently, this recognizer rejects the irreducible words in~$\dayone$, accepting all others.
Let~$\overline{\Rsplit \circ \Tsinglemark}$ denote the complement recognizer, easily computable after determinization \cite[\S4.2]{HopcroftMotwaniUllman_2007}. It accepts exactly the words rejected by~$\Rsplit \circ \Tsinglemark$, recognizing the set of atoms in~$\dayone$ union the set of words not in~$\dayone$.
Finally, we want to also reject the words not in~$\dayone$, which are exactly the words rejected by~$\Trecog \circ \Taudio$.
We, thus, define the irreducible word recognizer
\[ \Rirred{} = \overline{\Rsplit \circ \Tsinglemark} \circ (\Trecog \circ \Taudio)_\textit{filter}, \]
where the subscript “filter” indicates that we turn a recognizer into a filter.
This 26-state automaton is shown in Table~\ref{tab:Rirred}.
By construction, we obtain the following statement.

\begin{lemma}
  The automaton~\Rirred\ recognizes exactly the set of all atoms in~$\dayone$.
\end{lemma}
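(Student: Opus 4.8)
The plan is to unfold the definition
$\Rirred = \overline{\Rsplit \circ \Tsinglemark} \circ (\Trecog \circ \Taudio)_\textit{filter}$
one layer at a time, reading off at each stage the input language recognized, using the semantics of composition, complementation, and filters established in Section~\ref{sec:basics-autom-theory} together with the Splitting Theorem (Theorem~\ref{thm:splittings}), which guarantees that $\Rsplit$ recognizes exactly the language of splittings. The whole argument is bookkeeping: at each step I intersect, complement, or restrict a language whose description is already pinned down, so the goal is to verify that the net effect is precisely ``atoms of $\dayone$''.

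First I would determine $L_\text{in}(\Rsplit \circ \Tsinglemark)$. By the characterization of $\Tsinglemark$, a word $w \in A^*$ is mapped either to $\varepsilon$ (when $w = \varepsilon$) or to some $u \splt v$ with $w = uv$ and $u, v$ nonempty, and by the composition rule $w$ is accepted iff one of these outputs is a splitting. Since $\varepsilon$ is a trivial splitting, and since a nonempty word admits a binary splitting $u \splt v$ exactly when it admits a nontrivial splitting at all, I would conclude that $\Rsplit \circ \Tsinglemark$ accepts precisely the empty word together with the reducible words of $\dayone$; equivalently, it rejects exactly the atoms of $\dayone$ and the words outside $\dayone$.

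Next I would pass to the complement $\overline{\Rsplit \circ \Tsinglemark}$, whose input language is $A^* \setminus \bigl(\{\varepsilon\} \cup \{\text{reducible words of }\dayone\}\bigr)$. Partitioning $\dayone$ into the empty word, the atoms, and the reducible words shows that this complement recognizes exactly $\{\text{atoms of }\dayone\} \cup (A^* \setminus \dayone)$. Finally, composing with the filter $(\Trecog \circ \Taudio)_\textit{filter}$, whose input language is $L_\text{in}(\Taudio) = \dayone$, has the effect of intersecting the recognized language with $\dayone$, which removes the stray words outside $\dayone$ and leaves precisely the atoms of $\dayone$. This is the claimed statement.

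The only genuine obstacle is the equivalence invoked in the second step: that a word is reducible exactly when it admits a single binary cut $u \splt v$ that is a splitting. This is what makes the one-mark transducer $\Tsinglemark$ sufficient despite inserting only a single $\splt$. It follows from the splitting condition, since a nontrivial splitting into $r \geq 2$ factors restricts, at its first cut, to a valid binary splitting, while conversely any binary splitting is already nontrivial. Establishing this equivalence, and then tracking the empty word and the out-of-$\dayone$ words correctly through the complement and the filter, is where the care lies; the remaining steps are direct applications of the composition and filter conventions of Section~\ref{sec:basics-autom-theory}.
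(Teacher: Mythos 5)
Your proposal is correct and takes essentially the same route as the paper, which proves the lemma ``by construction'': the paragraph preceding the lemma establishes exactly your three steps --- $\Rsplit \circ \Tsinglemark$ accepts $\varepsilon$ and the reducible words of $\dayone$, its complement accepts the atoms of $\dayone$ together with the words outside $\dayone$, and composing with $(\Trecog \circ \Taudio)_\textit{filter}$ intersects with $\dayone$ to leave precisely the atoms. The point you single out as the genuine obstacle --- that reducibility is equivalent to admitting a single binary cut $u \splt v$, because an $r$-fold splitting restricts at its first cut to a valid binary splitting --- is the same observation the paper leaves implicit in its characterization of the language of $\Rsplit \circ \Tsinglemark$.
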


\begin{table}[ht]
  \caption{\centering The atom recognizer \Rirred, after determinization and minimization.}

  \small\centering
  \setlength{\tabcolsep}{2pt}
  \begin{tabular}{l@{\quad}*{26}{>{\ttfamily}c}}\toprule
    state & S & a & b & c & d & e & f & g & h & i & j & k & l & m & n & o & p & q & r & s & t & u & v & w & x & y\\
    initial & $\bullet$ &   &   &  &   &   &   &   &   &   &   &   &   &   &   &   &   &   &   &   &   &   &   &   &   &  \\
    final  &   & $\bullet$ & $\bullet$ & $\bullet$ & $\bullet$ & $\bullet$ &   &   & $\bullet$ &   &   &   &   &   & $\bullet$ & $\bullet$ & $\bullet$ & $\bullet$ & $\bullet$ &   &   &   &   & $\bullet$ & $\bullet$ & $\bullet$\\
    \midrule
    input $1$  & a & b & c &  & e & c &  & e & e & j & k & c &  &  & a & a &  &  &  &  &  &  &  & a & e & e\\
    input $2$  & x & d & d & d & g & f & g & h &  & l &  &  & m & h & d & d &  &  &  &  &  &  &  & d & y & h\\
    input $3$  & w & w & w & w & i & u &  & i & i & n &  &  &  &  & o &  &  &  &  &  &  & n &  & n & i & i\\
    input $d$  & p & p & p & p & t & v &  & t & t & s &  &  &  &  & p & p & q & r &  & t & r &  & q & p & t & t\\
    \bottomrule
  \end{tabular}
  \label{tab:Rirred}
\end{table}

\subsection{Cosmological Theorem}

The introduced notions enable us to formulate a precise statement about audioactive decay. Let~$E$ denote the set of all common and transuranic elements. It is a set of 94 atoms of which elements are given in the appendix.

\begin{theorem}[Cosmological Theorem]\label{thm:cosmological}
  For every word~$x$ over~$A$ without “$aaaa$” subwords, and all~$n \geq 24$, the $n$th audioactive derivation of~$x$ splits into atoms belonging to~$E$.
\end{theorem}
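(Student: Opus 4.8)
The plan is to reduce the cosmological theorem to a finite computation on automata, mirroring the strategy used for the Splitting Theorem. The key observation is that the statement ``$C^n(x)$ splits into atoms belonging to~$E$ for all $n \geq 24$'' is equivalent to asking that the set of \emph{atomic factors} that can appear in $C^n(x)$, ranging over all $x \in \dayone$ and all $n \geq 24$, is contained in the finite set~$E$. So first I would build a transducer that, from a word in~$\dayone$, nondeterministically extracts a single atomic factor of its splitting into atoms. The tools for this are already in hand: the splitting recognizer~$\Rsplit$ (Theorem~\ref{thm:splittings}), together with~$\Tmultimark$ to insert cut markers and~$\Tscissors$ to extract a marked factor. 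Concretely, composing $\Rsplit$ (as a filter) after $\Tmultimark$ keeps exactly the words whose $\splt$-insertions are genuine splittings, and then $\Tscissors$ cuts out one factor between two markers; intersecting the extracted factor with~$\Rirred$ ensures it is atomic. Call the resulting transducer~$\mathcal{F}$, the ``atomic-factor extractor,'' mapping $x \in \dayone$ to any single atomic factor of its unique splitting.

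Next I would form, for each~$n$, the generator describing all atomic factors reachable after $n$ derivations. Starting from the generator~$\Tgen$ of $\dayone$ (obtained as $\Taudio \circ \Tgen$ restricted appropriately, or as $\Rsplit$-style source), I would compose with the audioactive transducer $n$ times and then with~$\mathcal{F}$, producing a generator $\mathcal{G}_n = \mathcal{F} \circ \Taudio^{\,n} \circ \Tgen$ whose output language is exactly the set of atoms appearing as atomic factors of some day-$n$ word. As in Theorem~\ref{thm:splittings}, the crucial structural fact is monotonicity: because every atom of $C^{n+1}(x)$ arises by splitting a derivation of an atom of $C^n(x)$, the output language of $\mathcal{G}_{n+1}$ is contained in that of $\mathcal{G}_n$. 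This gives a decreasing chain of languages that must stabilize, and the stabilized language is precisely the set of atoms that persist forever—which the cosmological theorem asserts equals~$E$.

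The computation then proceeds exactly as before: convert each $\mathcal{G}_n$ to a recognizer via $\Trecog \circ \mathcal{G}_n$ (or keep it as a generator and minimize its output automaton), minimize at every step to keep the state count manageable, and compare successive automata for equivalence using the decidability of automaton equivalence guaranteed by minimal DFA uniqueness. The theorem is proved by computing the minimal automaton at levels $n = 24$ and $n = 25$ and verifying that they coincide; by monotonicity this forces stabilization for all $n \geq 24$. Finally I would verify, by a direct finite check, that the stabilized output language is exactly the set of $94$ atoms comprising~$E$—each of the common and transuranic elements is accepted, and nothing else is.

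The hard part will not be the logical skeleton, which parallels the Splitting Theorem almost verbatim, but controlling the state explosion in the composed transducers: $\Taudio^{\,n}$ for $n$ near $24$ is enormous before minimization, and the extractor $\mathcal{F}$ itself (built from $\Rsplit$, $\Tmultimark$, $\Tscissors$, $\Rirred$) is already sizeable, so the composed generators could have thousands of states. The technique that makes this tractable is aggressive minimization after each composition, as the excerpt already flags that intermediate automata can run into the thousands of states. A secondary subtlety is correctly handling the transuranic families and the carried-over symbol~$d$: I must confirm that the automaton-level treatment of $d$ as a single opaque symbol faithfully captures all digits $\geq 4$ and their two transuranic families, so that the stabilized language genuinely corresponds to the full set~$E$ of $94$ elements rather than a collapsed or incomplete version of it.
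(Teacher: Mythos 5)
Your construction is exactly the paper's: your extractor $\mathcal{F}$ is the paper's transducer $\Tirred = \Rirred_\textit{filter} \circ \Tscissors \circ \Rsplit_\textit{filter} \circ \Tmultimark$, your $\mathcal{G}_n = \mathcal{F}\circ\Taudio^{n}\circ\Tgen$ is the paper's $\Tirred\circ\Taudio^{n}\circ\Tgen$, and the computational plan (minimize, compare levels $24$ and $25$, enumerate the $94$ accepted words) is the same. The gap is in the logic that converts the single observed equality $E_{24}=E_{25}$ into ``$E_n=E_{24}$ for all $n\geq 24$.'' You invoke monotonicity twice, and neither use is valid: a decreasing chain of languages need not stabilize at all (take $L_n$ to be the words of length at least $n$), and, more to the point, a decreasing chain with $E_{24}=E_{25}$ can perfectly well have $E_{26}\subsetneq E_{25}$ --- a chain may pause for one step and then keep shrinking (take $S_n=\{a,b\}$ for $n\leq 25$ and $S_n=\{a\}$ for $n\geq 26$). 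Nor can you argue that the chain lives in a finite poset, since the finiteness of the $E_n$ is part of the conclusion, not a hypothesis. So ``by monotonicity this forces stabilization'' is a non sequitur, and as written the proof fails at its final step.

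What closes the gap --- and this is the paper's key step --- is that $E_{n+1}$ is a \emph{function} of $E_n$ alone: because the atomic factors of the derivation of a word are exactly the atomic factors of the derivations of its atomic factors, one has $E_{n+1}=g(E_n)$ where $g(S)$ is the set of atomic factors of derivations of elements of $S$. Then the one computed equality propagates by induction: $E_{26}=g(E_{25})=g(E_{24})=E_{25}$, and so on. At the automaton level this is the identity $\Tirred\circ\Taudio \equiv \Tirred\circ\Taudio\circ\Tirred$, which gives $\mathcal{G}_{n+1}\equiv(\Tirred\circ\Taudio)\circ\mathcal{G}_n$; incidentally, this same identity resolves the state-explosion worry you raise, since it allows computing $\mathcal{G}_n$ recursively (never more than $592$ states) rather than forming $\Taudio^{25}\circ\Tgen$ with its $194{,}625$ states. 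The irony is that you state precisely the needed fact --- ``every atom of $C^{n+1}(x)$ arises by splitting a derivation of an atom of $C^{n}(x)$'' --- but you use it only to extract the containment $E_{n+1}\subseteq E_n$, which is true (it follows even more directly from $C(\dayone)\subseteq\dayone$) but too weak. Replace the appeal to monotonicity by the induction based on this recurrence, and your proposal becomes the paper's proof.
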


Let~$E_n$ denote the set of all atomic factors of all the words in $L_\text{out}(\Taudio^n)$.
Our goal is to show that~$E_n$ stabilizes for~$n \geq 24$.
We further aim to explicitly describe the ultimate value of~$E_n$.
The proof rests upon the recognizers~\Rsplit{} and~\Rirred.

Let us construct a transducer~\Tirred{}, for “atomic factor”, over the alphabet~$A$ for both input and output, such that~$u\to_\Tirred v$ if and only if~$v$ is an atomic factor of~$u$.
This transducer is simply
\[ \Tirred = \Rirred_\textit{filter} \circ\Tscissors\circ \Rsplit_\textit{filter}\circ \Tmultimark, \]
where the subscript “filter” indicates the conversion of a recognizer into a filter.
More explicitly, the transducer inputs a word over~$A$, and then:
\begin{itemize}
  \item the multimark transducer~\Tmultimark{}~ nondeterministically inserts~\splt{} symbols;
  \item the splitting recognizer \Rsplit{} exclusively retains the splittings;
  \item the transducer~\Tscissors{} nondeterministically extracts a splitting factor of a given splitting;
  \item the atom recognizer \Rirred{} only retains the atomic factors.
\end{itemize}

\begin{proof}[Proof of Theorem~\ref{thm:cosmological}]
  We compute---see Figure~\ref{fig:TirredTable} and the subsequent discussion---that
  \begin{equation}
    \label{eq:2}
    \Tirred \circ \Taudio^{25} \circ \Tgen \equiv \Tirred \circ \Taudio^{24} \circ  \Tgen,
  \end{equation}
  so~$E_{25} = E_{24}$.
  Since the atomic factors of the derivation of a word~$x$ are exactly the atomic factors of the derivations of the atomic factors of~$x$,
  it follows that the elements of~$E_{n+1}$
  are the atomic factors of the derivation of the elements of~$E_n$.
  Therefore, $E_{25} = E_{24}$ implies~$E_{n} = E_{24}$ for all~$n \geq 24$.

  It remains to enumerate all the elements of~$E_{24}$, which boils down to the enumeration of all the paths from any of the initial states to any of the final states in $\Tirred \circ \Taudio^{24} \circ \Tgen$.
\end{proof}

It is possible to check the equivalence in~\eqref{eq:2} directly, but it is difficult
because $\Taudio^{25} \circ \Tgen$ is a large automaton, with~194,625 states, after minimization.
Instead, we observe that $\Tirred \circ \Taudio \equiv \Tirred \circ \Taudio \circ \Tirred$,
because, as noted in the proof of Theorem~\ref{thm:cosmological},
the irreducible splitting factors of the derivation of a word~$x$ are exactly the irreducible splitting factors of the derivations of the irreducible splitting factors of~$x$.
It follows that for any~$n \geq 1$,
\[ \Tirred \circ \Taudio^{n} \circ \Tgen =  \Tirred \circ \Taudio \circ \left ( \Tirred \circ \Taudio^{n-1} \circ \Tgen \right).\]
This gives a much more efficient recursive way of computing $\Tirred \circ \Taudio^{n} \circ \Tgen$.
Naturally, we minimize the automata after each composition. The maximum number of states for~$\Tirred \circ \Taudio^{n} \circ \Tgen$
is $592$, when~$n=6$. The total computation time is $150$\,ms on a laptop.

\begin{figure}[ht]
  \centering

\begin{tikzpicture}[/tikz/background rectangle/.style={fill={rgb,1:red,1.0;green,1.0;blue,1.0}, fill opacity={1.0}, draw opacity={1.0}}, show background rectangle, scale=0.7]
\begin{axis}[point meta max={nan}, point meta min={nan}, legend cell align={left}, legend columns={1}, title={}, title style={at={{(0.5,1)}}, anchor={south}, font={{\fontsize{14 pt}{18.2 pt}\selectfont}}, color={rgb,1:red,0.0;green,0.0;blue,0.0}, draw opacity={1.0}, rotate={0.0}, align={center}}, legend style={color={rgb,1:red,0.0;green,0.0;blue,0.0}, draw opacity={1.0}, line width={1}, solid, fill={rgb,1:red,1.0;green,1.0;blue,1.0}, fill opacity={1.0}, text opacity={1.0}, font={{\fontsize{8 pt}{10.4 pt}\selectfont}}, text={rgb,1:red,0.0;green,0.0;blue,0.0}, cells={anchor={center}}, at={(1.02, 1)}, anchor={north west}}, axis background/.style={fill={rgb,1:red,1.0;green,1.0;blue,1.0}, opacity={1.0}}, anchor={north west}, xshift={1.0mm}, yshift={-1.0mm}, width={150.4mm}, height={99.6mm}, scaled x ticks={false}, xlabel={$n$}, x tick style={color={rgb,1:red,0.0;green,0.0;blue,0.0}, opacity={1.0}}, x tick label style={color={rgb,1:red,0.0;green,0.0;blue,0.0}, opacity={1.0}, rotate={0}}, xlabel style={at={(ticklabel cs:0.5)}, anchor=near ticklabel, at={{(ticklabel cs:0.5)}}, anchor={near ticklabel}, font={{\fontsize{11 pt}{14.3 pt}\selectfont}}, color={rgb,1:red,0.0;green,0.0;blue,0.0}, draw opacity={1.0}, rotate={0.0}}, xmajorgrids={true}, xmin={0.5}, xmax={25.5}, xticklabels={{$1$,$5$,$10$,$15$,$20$,$25$}}, xtick={{1.0,5.0,10.0,15.0,20.0,25.0}}, xtick align={inside}, xticklabel style={font={{\fontsize{8 pt}{10.4 pt}\selectfont}}, color={rgb,1:red,0.0;green,0.0;blue,0.0}, draw opacity={1.0}, rotate={0.0}}, x grid style={color={rgb,1:red,0.0;green,0.0;blue,0.0}, draw opacity={0.1}, line width={0.5}, solid}, axis x line*={left}, x axis line style={color={rgb,1:red,0.0;green,0.0;blue,0.0}, draw opacity={1.0}, line width={1}, solid}, scaled y ticks={false}, ylabel={number of states}, y tick style={color={rgb,1:red,0.0;green,0.0;blue,0.0}, opacity={1.0}}, y tick label style={color={rgb,1:red,0.0;green,0.0;blue,0.0}, opacity={1.0}, rotate={0}}, ylabel style={at={(ticklabel cs:0.5)}, anchor=near ticklabel, at={{(ticklabel cs:0.5)}}, anchor={near ticklabel}, font={{\fontsize{11 pt}{14.3 pt}\selectfont}}, color={rgb,1:red,0.0;green,0.0;blue,0.0}, draw opacity={1.0}, rotate={0.0}}, ymajorgrids={true}, ymin={0}, ymax={600}, yticklabels={{$0$,$100$,$200$,$300$,$400$,$500$,$600$}}, ytick={{0.0,100.0,200.0,300.0,400.0,500.0,600.0}}, ytick align={inside}, yticklabel style={font={{\fontsize{8 pt}{10.4 pt}\selectfont}}, color={rgb,1:red,0.0;green,0.0;blue,0.0}, draw opacity={1.0}, rotate={0.0}}, y grid style={color={rgb,1:red,0.0;green,0.0;blue,0.0}, draw opacity={0.1}, line width={0.5}, solid}, axis y line*={left}, y axis line style={color={rgb,1:red,0.0;green,0.0;blue,0.0}, draw opacity={1.0}, line width={1}, solid}, colorbar={false}]
    \addplot+[line width={0}, draw opacity={0}, fill={rgb,1:red,0.0;green,0.6056;blue,0.9787}, fill opacity={0.4}, mark={none}, forget plot]
        coordinates {
            (1,43.0)
            (2,138.0)
            (3,266.0)
            (4,409.0)
            (5,534.0)
            (6,592.0)
            (7,570.0)
            (8,513.0)
            (9,430.0)
            (10,361.0)
            (11,320.0)
            (12,310.0)
            (13,308.0)
            (14,278.0)
            (15,248.0)
            (16,255.0)
            (17,258.0)
            (18,266.0)
            (19,277.0)
            (20,273.0)
            (21,267.0)
            (22,258.0)
            (23,251.0)
            (24,243.0)
            (25,243.0)
            (25,0.0)
            (24,0.0)
            (23,0.0)
            (22,0.0)
            (21,0.0)
            (20,0.0)
            (19,0.0)
            (18,0.0)
            (17,0.0)
            (16,0.0)
            (15,0.0)
            (14,0.0)
            (13,0.0)
            (12,0.0)
            (11,0.0)
            (10,0.0)
            (9,0.0)
            (8,0.0)
            (7,0.0)
            (6,0.0)
            (5,0.0)
            (4,0.0)
            (3,0.0)
            (2,0.0)
            (1,0.0)
            (1,43.0)
        }
        ;
    \addplot[color={rgb,1:red,0.0;green,0.6056;blue,0.9787}, name path={41}, area legend, draw opacity={1.0}, line width={1}, solid]
        table[row sep={\\}]
        {
            \\
            1.0  43.0  \\
            2.0  138.0  \\
            3.0  266.0  \\
            4.0  409.0  \\
            5.0  534.0  \\
            6.0  592.0  \\
            7.0  570.0  \\
            8.0  513.0  \\
            9.0  430.0  \\
            10.0  361.0  \\
            11.0  320.0  \\
            12.0  310.0  \\
            13.0  308.0  \\
            14.0  278.0  \\
            15.0  248.0  \\
            16.0  255.0  \\
            17.0  258.0  \\
            18.0  266.0  \\
            19.0  277.0  \\
            20.0  273.0  \\
            21.0  267.0  \\
            22.0  258.0  \\
            23.0  251.0  \\
            24.0  243.0  \\
            25.0  243.0  \\
        }
        ;
    \addplot[color={rgb,1:red,0.0;green,0.6056;blue,0.9787}, name path={42}, only marks, draw opacity={1.0}, line width={0}, solid, mark={*}, mark size={3.0 pt}, mark repeat={1}, mark options={color={rgb,1:red,0.0;green,0.0;blue,0.0}, draw opacity={1.0}, fill={rgb,1:red,0.0;green,0.6056;blue,0.9787}, fill opacity={1.0}, line width={0.75}, rotate={0}, solid}]
        table[row sep={\\}]
        {
            \\
            1.0  43.0  \\
            2.0  138.0  \\
            3.0  266.0  \\
            4.0  409.0  \\
            5.0  534.0  \\
            6.0  592.0  \\
            7.0  570.0  \\
            8.0  513.0  \\
            9.0  430.0  \\
            10.0  361.0  \\
            11.0  320.0  \\
            12.0  310.0  \\
            13.0  308.0  \\
            14.0  278.0  \\
            15.0  248.0  \\
            16.0  255.0  \\
            17.0  258.0  \\
            18.0  266.0  \\
            19.0  277.0  \\
            20.0  273.0  \\
            21.0  267.0  \\
            22.0  258.0  \\
            23.0  251.0  \\
            24.0  243.0  \\
            25.0  243.0  \\
        }
        ;
\end{axis}
\end{tikzpicture}

  \caption{\centering Number of states of~$\Tirred \circ\Taudio^{n} \circ\Tgen$, after determinization and minimization.}
  \label{fig:TirredTable}
\end{figure}
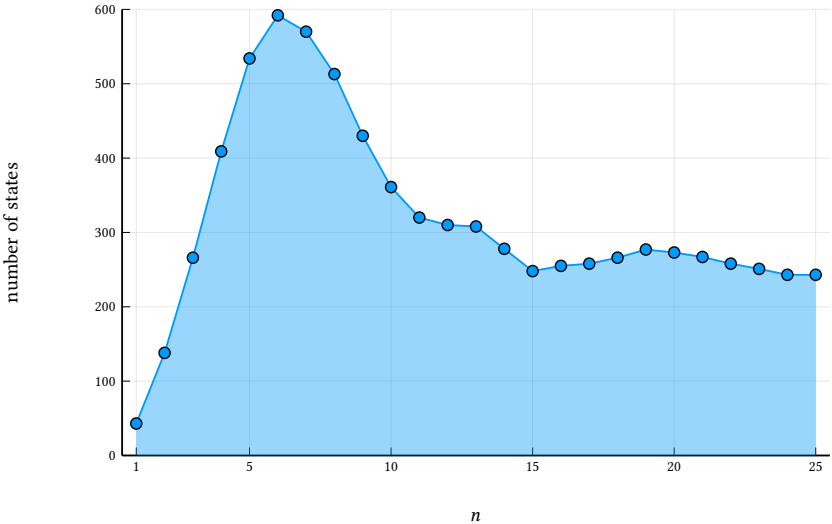



\section{Implementation} \label{section: Implementation} For readers interested in consulting the code or reproducing the results, we present a brief comment on our C++ implementation. The code is available at
\begin{center}
  \url{https://github.com/AleksandrStorozhenko/ConwayTransducer}.
\end{center}

\subsection{Data structure}
It is convenient to index the symbols in the input and output alphabets,
so that we can assume that they are $\left\{ 0,\dotsc,n-1 \right\}$ and~$\left\{ 0,\dotsc,m-1 \right\}$, respectively.
We can represent the transitions of a transducer~$\mathcal{T}$ as a 2d array~$T$ of arrays of pairs
such that~$T[s][a]$ contains the array of all transitions with source state~$s$ and input symbol~$a$. In C++, this gives the following unsophisticated data structure for manipulating transducers.

\begin{verbatim}
#include <set>
#include <vector>
using namespace std;

struct Transducer {
  using symbol = int;
  using state = int;

  int inputSymbols, outputSymbols;
  set<state> startNodes, finalNodes;
  vector<vector<vector<pair<symbol, state>>>> table;
};
\end{verbatim}

The chosen representation does not, however, lend itself to an effective manipulation of generators (with an empty input language), as we cannot efficiently query transitions by output symbol. Instead, we transpose generators into recognizers, swapping input and output symbols for each transition.

\subsection{Composition} Given two finite state transducers $\mathcal{U}$ and $\mathcal{V}$, such that the output alphabet of $\mathcal{U}$ matches the input alphabet of $\mathcal{V}$,
we want to compute a transducer~$\mathcal{W}$ realizing the composion~$\mathcal{V}\circ \mathcal{U}$.

We choose the set of states~$Q_{\mathcal{W}}$ to be~$Q_\mathcal{U} \times Q_\mathcal{V}$. The initial states of~$\mathcal{W}$ are pairs of initial states of~$\mathcal{U}$ and~$\mathcal{V}$ respectively, and similarly for final states.
For each state $(q_u, q_v) \in Q_{\mathcal{U}} \times Q_{\mathcal{V}}$, we declare  the following transitions in~$\mathcal{W}$:
\begin{itemize}
  \item $((q_u, q_v), a, \varepsilon, (q_u', q_v))$ whenever there is a transition~$(q_u, a, \varepsilon, q_u')$ in~$\mathcal{U}$;
  \item $((q_u, q_v), \varepsilon, c, (q_u, q_v'))$ for any transition~$(q_v, \varepsilon, c, q_v')$ in~$\mathcal{V}$; and
  \item $((q_u, q_v), a, c, (q_u', q_v'))$ whenever there exists transitions~$(q_u, a, b, q_u')$ and~$(q_v, b, c, q_v')$ in~$\mathcal{U}$ and~$\mathcal{V}$ respectively for some~$b\in B$.
\end{itemize}

In practice, only a fraction of the constructed states is reachable, so it is worthwhile to construct the state set by a traversal from the initial states to avoid the unreachable sets. 

\subsection{Determinization}
A recognizer~$\mathcal{T}$ admits an equivalent deterministic recognizer~$\mathcal{D}$
with the associated set of states equal to the set of subsets of~$Q_\mathcal{T}$ (the set of states of~$\mathcal{T}$).
The initial state of~$\mathcal{D}$ corresponds to the set of initial states of~$\mathcal{T}$. For~$U \in Q_\mathcal{D}$ and a symbol~$a$, there exists a transition~$(U, a, \varepsilon, V)$ in~$\mathcal{D}$ for~$V \in Q_\mathcal{D}$, the set of all states, reachable by a transition in~$\mathcal{T}$ from a state in~$U$ with an input symbol~$a$.
In practice, only a fraction of these states is reachable. We, thus, perform the determinization by a traversal from the initial state. In the worst case,  the deterministic recognizer~$\mathcal{D}$ admits exponentially many states in the size of~$\mathcal{Q_T}$.

\subsection{Minimization}
Within the context of this paper, minimization of deterministic recognizers (and generators \emph{via} transposition) satisfies two principal aims: computational efficiency via reduction of states; and language equivalence verification.

For minimizing recognizers, we resorted to
Brzozowski's algorithm~\cite{Brzozowski_1962} \cite[Chapter~10]{Pin_2021}:
Given a recognizer~$\mathcal{T}$, swap the source and target states of each transitions (reversal), determinize, reverse again, determinize again.
The algorithm has a worst-case exponential complexity, due to the potential of exponential growth in the number of states at the first determinization,
but it was enough for our purposes.

Algorithms for transducers (minimization, equivalence) are more subtle, with many undecidability results \cite[Chapter~3]{Pin_2021}.
Yet, we found it useful to have a heuristic size-reduction procedure by interpreting a transducer over input alphabet~$A$ and output alphabet~$B$
as a recognizer over the alphabet  $A^? \times B^?$, simply considering the input-output pair of each transition as an input symbol, and minimizing it.
For example, the size of the transducer $\Tirred \circ\Taudio$ reduces from~977 states to~82. This makes the computation of~$\left( \Tirred\circ \Taudio \right)^k\circ \Tgen$ much faster in the proof of Theorem~\ref{thm:cosmological}.

\subsection{Equivalence of recognizers}
To check that two recognizers~$\mathcal{T}$ and~$\mathcal{T}'$ are equivalent, it is enough to check that the minimization of~$\mathcal{T}$ and~$\mathcal{T}'$
\cite[\S 4.4]{HopcroftMotwaniUllman_2007} are equal, up to a bijection~$Q_{\mathcal{T}} \to \mathcal{Q}_{\mathcal{T}'}$.
We can construct this bijection, or prove that it does not exist,
by a traversal from the initial states of~$\mathcal{T}$ and~$\mathcal{T}'$.

\section*{Conclusion}

The study of audioactive decay, from Conway's discovery to the four known
computational proofs, including this one, is an outstanding illustration of the
principles of \emph{experimental mathematics}.
It was, of course, experimentation that led to the formulation of the cosmological theorem, before any proof could be provided.
But the experimental approach in mathematics extends far beyond mere data accumulation for conjecture formation.
Our entire automaton-based proof is experimental in nature.
It is not a proof \emph{by computation}, but a proof \emph{by experimentation}.
While there is an algorithm that proves that~$3\times 19 = 57$, for instance, we do not have an algorithm that proves the cosmological theorem.
Computation serves as a microscope or spectrometer, enabling us to observe and analyze mathematical phenomena without prior knowledge of their nature.
The nature of the cosmological theorem is not a consequence of the formulation of the audioactive derivation by a transducer (choose a different transducer, and you might not observe a splitting theorem or a cosmological theorem), but experimentation and computation uncover this structure.

\begin{acks}
  This work has been supported by the \grantsponsor{anr}{Agence nationale de la
    recher\-che (ANR)}{https://anr.fr}, grant agreement
  \grantnum{anr}{ANR-19-CE40-0018} (De Rerum Natura); and by the
  \grantsponsor{erc}{European Research Council (ERC)}{https://erc.europa.eu}
  under the European Union’s Horizon Europe research and innovation programme,
  grant agreement \grantnum{erc}{101040794} (10000~DIGITS).
\end{acks}

\printbibliography

\section*{Appendix: Periodic Table}

\newcommand{\elemH}{${}_{1}\text{H}$}
\newcommand{\elemHe}{${}_{2}\text{He}$}
\newcommand{\elemLi}{${}_{3}\text{Li}$}
\newcommand{\elemBe}{${}_{4}\text{Be}$}
\newcommand{\elemB}{${}_{5}\text{B}$}
\newcommand{\elemC}{${}_{6}\text{C}$}
\newcommand{\elemN}{${}_{7}\text{N}$}
\newcommand{\elemO}{${}_{8}\text{O}$}
\newcommand{\elemF}{${}_{9}\text{F}$}
\newcommand{\elemNe}{${}_{10}\text{Ne}$}
\newcommand{\elemNa}{${}_{11}\text{Na}$}
\newcommand{\elemMg}{${}_{12}\text{Mg}$}
\newcommand{\elemAl}{${}_{13}\text{Al}$}
\newcommand{\elemSi}{${}_{14}\text{Si}$}
\newcommand{\elemP}{${}_{15}\text{P}$}
\newcommand{\elemS}{${}_{16}\text{S}$}
\newcommand{\elemCl}{${}_{17}\text{Cl}$}
\newcommand{\elemAr}{${}_{18}\text{Ar}$}
\newcommand{\elemK}{${}_{19}\text{K}$}
\newcommand{\elemCa}{${}_{20}\text{Ca}$}
\newcommand{\elemSc}{${}_{21}\text{Sc}$}
\newcommand{\elemTi}{${}_{22}\text{Ti}$}
\newcommand{\elemV}{${}_{23}\text{V}$}
\newcommand{\elemCr}{${}_{24}\text{Cr}$}
\newcommand{\elemMn}{${}_{25}\text{Mn}$}
\newcommand{\elemFe}{${}_{26}\text{Fe}$}
\newcommand{\elemCo}{${}_{27}\text{Co}$}
\newcommand{\elemNi}{${}_{28}\text{Ni}$}
\newcommand{\elemCu}{${}_{29}\text{Cu}$}
\newcommand{\elemZn}{${}_{30}\text{Zn}$}
\newcommand{\elemGa}{${}_{31}\text{Ga}$}
\newcommand{\elemGe}{${}_{32}\text{Ge}$}
\newcommand{\elemAs}{${}_{33}\text{As}$}
\newcommand{\elemSe}{${}_{34}\text{Se}$}
\newcommand{\elemBr}{${}_{35}\text{Br}$}
\newcommand{\elemKr}{${}_{36}\text{Kr}$}
\newcommand{\elemRb}{${}_{37}\text{Rb}$}
\newcommand{\elemSr}{${}_{38}\text{Sr}$}
\newcommand{\elemY}{${}_{39}\text{Y}$}
\newcommand{\elemZr}{${}_{40}\text{Zr}$}
\newcommand{\elemNb}{${}_{41}\text{Nb}$}
\newcommand{\elemMo}{${}_{42}\text{Mo}$}
\newcommand{\elemTc}{${}_{43}\text{Tc}$}
\newcommand{\elemRu}{${}_{44}\text{Ru}$}
\newcommand{\elemRh}{${}_{45}\text{Rh}$}
\newcommand{\elemPd}{${}_{46}\text{Pd}$}
\newcommand{\elemAg}{${}_{47}\text{Ag}$}
\newcommand{\elemCd}{${}_{48}\text{Cd}$}
\newcommand{\elemIn}{${}_{49}\text{In}$}
\newcommand{\elemSn}{${}_{50}\text{Sn}$}
\newcommand{\elemSb}{${}_{51}\text{Sb}$}
\newcommand{\elemTe}{${}_{52}\text{Te}$}
\newcommand{\elemI}{${}_{53}\text{I}$}
\newcommand{\elemXe}{${}_{54}\text{Xe}$}
\newcommand{\elemCs}{${}_{55}\text{Cs}$}
\newcommand{\elemBa}{${}_{56}\text{Ba}$}
\newcommand{\elemLa}{${}_{57}\text{La}$}
\newcommand{\elemCe}{${}_{58}\text{Ce}$}
\newcommand{\elemPr}{${}_{59}\text{Pr}$}
\newcommand{\elemNd}{${}_{60}\text{Nd}$}
\newcommand{\elemPm}{${}_{61}\text{Pm}$}
\newcommand{\elemSm}{${}_{62}\text{Sm}$}
\newcommand{\elemEu}{${}_{63}\text{Eu}$}
\newcommand{\elemGd}{${}_{64}\text{Gd}$}
\newcommand{\elemTb}{${}_{65}\text{Tb}$}
\newcommand{\elemDy}{${}_{66}\text{Dy}$}
\newcommand{\elemHo}{${}_{67}\text{Ho}$}
\newcommand{\elemEr}{${}_{68}\text{Er}$}
\newcommand{\elemTm}{${}_{69}\text{Tm}$}
\newcommand{\elemYb}{${}_{70}\text{Yb}$}
\newcommand{\elemLu}{${}_{71}\text{Lu}$}
\newcommand{\elemHf}{${}_{72}\text{Hf}$}
\newcommand{\elemTa}{${}_{73}\text{Ta}$}
\newcommand{\elemW}{${}_{74}\text{W}$}
\newcommand{\elemRe}{${}_{75}\text{Re}$}
\newcommand{\elemOs}{${}_{76}\text{Os}$}
\newcommand{\elemIr}{${}_{77}\text{Ir}$}
\newcommand{\elemPt}{${}_{78}\text{Pt}$}
\newcommand{\elemAu}{${}_{79}\text{Au}$}
\newcommand{\elemHg}{${}_{80}\text{Hg}$}
\newcommand{\elemTl}{${}_{81}\text{Tl}$}
\newcommand{\elemPb}{${}_{82}\text{Pb}$}
\newcommand{\elemBi}{${}_{83}\text{Bi}$}
\newcommand{\elemPo}{${}_{84}\text{Po}$}
\newcommand{\elemAt}{${}_{85}\text{At}$}
\newcommand{\elemRn}{${}_{86}\text{Rn}$}
\newcommand{\elemFr}{${}_{87}\text{Fr}$}
\newcommand{\elemRa}{${}_{88}\text{Ra}$}
\newcommand{\elemAc}{${}_{89}\text{Ac}$}
\newcommand{\elemTh}{${}_{90}\text{Th}$}
\newcommand{\elemPa}{${}_{91}\text{Pa}$}
\newcommand{\elemU}{${}_{92}\text{U}$}
\newcommand{\elemNp}{${}_{93}\text{Np}$}
\newcommand{\elemPu}{${}_{94}\text{Pu}$}

\begin{longtable}{@{\extracolsep{\fill}}ll>{\footnotesize}l}
  \toprule
  {name} & {derivation} & {\normalsize\rmfamily element}\\ \midrule

  \elemH & \elemH & 22 \\
  \elemHe & \elemHf\  \elemPa\  \elemH\  \elemCa\  \elemLi & 13112221133211322112211213322112 \\
  \elemLi & \elemHe & 312211322212221121123222112 \\
  \elemBe & \elemGe\  \elemCa\  \elemLi & 111312211312113221133211322112211213322112 \\
  \elemB & \elemBe & 1321132122211322212221121123222112 \\
  \elemC & \elemB & 3113112211322112211213322112 \\
  \elemN & \elemC & 111312212221121123222112 \\
  \elemO & \elemN & 132112211213322112 \\
  \elemF & \elemO & 31121123222112 \\
  \elemNe & \elemF & 111213322112 \\
  \elemNa & \elemNe & 123222112 \\
  \elemMg & \elemPm\  \elemNa & 3113322112 \\
  \elemAl & \elemMg & 1113222112 \\
  \elemSi & \elemAl & 1322112 \\
  \elemP & \elemHo\  \elemSi & 311311222112 \\
  \elemS & \elemP & 1113122112 \\
  \elemCl & \elemS & 132112 \\
  \elemAr & \elemCl & 3112 \\
  \elemK & \elemAr & 1112 \\
  \elemCa & \elemK & 12 \\
  \elemSc & \elemHo\  \elemPa\  \elemH\  \elemCa\  \elemCo & 3113112221133112 \\
  \elemTi & \elemSc & 11131221131112 \\
  \elemV & \elemTi & 13211312 \\
  \elemCr & \elemV & 31132\\
  \elemMn & \elemCr\  \elemSi & 111311222112 \\
  \elemFe & \elemMn & 13122112 \\
  \elemCo & \elemFe & 32112 \\
  \elemNi & \elemZn\  \elemCo & 11133112 \\
  \elemCu & \elemNi & 131112 \\
  \elemZn & \elemCu & 312 \\
  \elemGa & \elemEu\  \elemCa\  \elemAc\  \elemH\  \elemCa\  \elemZn & 13221133122211332 \\
  \elemGe & \elemHo\  \elemGa & 31131122211311122113222 \\
  \elemAs & \elemGe\  \elemNa & 11131221131211322113322112 \\
  \elemSe & \elemAs & 13211321222113222112 \\
  \elemBr & \elemSe & 3113112211322112 \\
  \elemKr & \elemBr & 11131221222112 \\
  \elemRb & \elemKr & 1321122112 \\
  \elemSr & \elemRb & 3112112 \\
  \elemY & \elemSr\  \elemU & 1112133 \\
  \elemZr & \elemY\  \elemH\  \elemCa\  \elemTc & 12322211331222113112211 \\
  \elemNb & \elemEr\  \elemZr & 1113122113322113111221131221 \\
  \elemMo & \elemNb & 13211322211312113211 \\
  \elemTc & \elemMo & 311322113212221 \\
  \elemRu & \elemEu\  \elemCa\  \elemTc & 132211331222113112211 \\
  \elemRh & \elemHo\  \elemRu & 311311222113111221131221 \\
  \elemPd & \elemRh & 111312211312113211 \\
  \elemAg & \elemPd & 132113212221 \\
  \elemCd & \elemAg & 3113112211 \\
  \elemIn & \elemCd & 11131221 \\
  \elemSn & \elemIn & 13211 \\
  \elemSb & \elemPm\  \elemSn & 3112221 \\
  \elemTe & \elemEu\  \elemCa\  \elemSb & 1322113312211 \\
  \elemI & \elemHo\  \elemTe & 311311222113111221 \\
  \elemXe & \elemI & 11131221131211 \\
  \elemCs & \elemXe & 13211321 \\
  \elemBa & \elemCs & 311311 \\
  \elemLa & \elemBa & 11131 \\
  \elemCe & \elemLa\  \elemH\  \elemCa\  \elemCo & 1321133112 \\
  \elemPr & \elemCe & 31131112 \\
  \elemNd & \elemPr & 111312 \\
  \elemPm & \elemNd & 132 \\
  \elemSm & \elemPm\  \elemCa\  \elemZn & 311332 \\
  \elemEu & \elemSm & 1113222 \\
  \elemGd & \elemEu\  \elemCa\  \elemCo & 13221133112 \\
  \elemTb & \elemHo\  \elemGd & 3113112221131112 \\
  \elemDy & \elemTb & 111312211312 \\
  \elemHo & \elemDy & 1321132 \\
  \elemEr & \elemHo\  \elemPm & 311311222 \\
  \elemTm & \elemEr\  \elemCa\  \elemCo & 11131221133112 \\
  \elemYb & \elemTm & 1321131112 \\
  \elemLu & \elemYb & 311312\\
  \elemHf & \elemLu & 11132 \\
  \elemTa & \elemHf\  \elemPa\  \elemH\  \elemCa\  \elemW & 13112221133211322112211213322113 \\
  \elemW & \elemTa & 312211322212221121123222113 \\
  \elemRe & \elemGe\  \elemCa\  \elemW & 111312211312113221133211322112211213322113 \\
  \elemOs & \elemRe & 1321132122211322212221121123222113 \\
  \elemIr & \elemOs & 3113112211322112211213322113 \\
  \elemPt & \elemIr & 111312212221121123222113 \\
  \elemAu & \elemPt & 132112211213322113 \\
  \elemHg & \elemAu & 31121123222113 \\
  \elemTl & \elemHg & 111213322113 \\
  \elemPb & \elemTl & 123222113 \\
  \elemBi & \elemPm\  \elemPb & 3113322113 \\
  \elemPo & \elemBi & 1113222113 \\
  \elemAt & \elemPo & 1322113 \\
  \elemRn & \elemHo\  \elemAt & 311311222113 \\
  \elemFr & \elemRn & 1113122113 \\
  \elemRa & \elemFr & 132113 \\
  \elemAc & \elemRa & 3113 \\
  \elemTh & \elemAc & 1113 \\
  \elemPa & \elemTh & 13 \\
  \elemU & \elemPa & 3\\[\medskipamount]

  \multicolumn{3}{l}{\emph{(transuranic elements)}}\\
  \elemNp & \elemHf\ \elemPa\ \elemH\ \elemCa\ \elemPu & 1311222113321132211221121332211$d$ \\
  \elemPu & \elemNp & 31221132221222112112322211$d$ \\
  \bottomrule
\end{longtable}
\end{document}